\definecolor{mycitecolor}{RGB}{71, 191, 38}
\definecolor{mylinkcolor}{RGB}{40, 115, 201}
\newtheoremstyle{mystyle}
  {}
  {}
  {}
  {}
  {\bfseries}
  {.}
  { }
  {}
\newtheorem{remark}{Remark}
\newtheorem{theorem}{Theorem}
\newtheorem{lemma}{Lemma}
\newtheorem{definition}{Definition}
\newtheorem{problem}{Problem}
\newtheorem{assumption}{Assumption}
\let\NAT@parse\undefined
\title{\LARGE \bf
Learning to Provably Satisfy High Relative Degree Constraints for Black-Box Systems*
}
\author{Jean-Baptiste Bouvier, Kartik Nagpal and Negar Mehr
\thanks{*This work is supported by the National Science Foundation, under grants ECCS-2145134 CAREER Award, CNS-2218759, and CCF-2211542.}
\thanks{Jean-Baptiste Bouvier, Kartik Nagpal and Negar Mehr are with the Department of Mechanical Engineering, University of California Berkeley, Berkeley, CA 94709, USA {\tt\small \{bouvier3, kartiknagpal, negar\}@berkeley.edu}}%
}
\begin{document}

\maketitle
\thispagestyle{empty}
\pagestyle{empty}

\begin{abstract}
    In this paper, we develop a method for learning a control policy guaranteed to satisfy an affine state constraint of high relative degree in closed loop with a black-box system. Previous reinforcement learning (RL) approaches to satisfy safety constraints either require access to the system model, or assume control affine dynamics, or only discourage violations with reward shaping. Only recently have these issues been addressed with POLICEd RL, which guarantees constraint satisfaction for black-box systems. However, this previous work can only enforce constraints of relative degree 1. To address this gap, we build a novel RL algorithm explicitly designed to enforce an affine state constraint of high relative degree in closed loop with a black-box control system. Our key insight is to make the learned policy be affine around the unsafe set and to use this affine region to dissipate the inertia of the high relative degree constraint. We prove that such policies guarantee constraint satisfaction for deterministic systems while being agnostic to the choice of the RL training algorithm. Our results demonstrate the capacity of our approach to enforce hard constraints in the Gym inverted pendulum and on a space shuttle landing simulation.
\end{abstract}

\section{Introduction}

The lack of safety guarantees in reinforcement learning (RL) has been impeding its wide deployment in real-world settings~\cite{RL_challenges}. Safety in RL is traditionally captured by state constraints preventing the system from entering unsafe regions \cite{review_safety_levels}. 
This issue has been investigated by numerous approaches and most commonly under the framework of constrained Markov decision processes (CMDPs)~\cite{CMDP, state_wise_constrained_MDP, constrained_policy_optimization}. CMDPs only \emph{encourage} policies to respect safety constraints by penalizing the expected violations \cite{RL_soft_constraint}; however, they do not provide any satisfaction guarantees~\cite{review_safe_RL}.
For safety-critical tasks, such as autonomous driving or human-robot interactions, safety guarantees are primordial and require the learned policy to maintain constraint respect.

A few RL attempts at learning provably safe policies have involved control barrier functions (CBFs) \cite{BarrierNet}, backward reachable sets \cite{backward_reachable}, and projection of control inputs onto safe sets \cite{safe_exploration, Optlayer}. However, all these methods require precise knowledge of the system dynamics, which is usually not available in RL.
To circumvent this issue and study black-box systems without an analytical model of the dynamics, the common approach has been to learn safety certificates \cite{sablas, ma2022joint, yang2023model}. Yet, by learning to approximate CBFs, the formal safety guarantees of these methods hinge upon the quality of their CBF approximation.
More reliable safety guarantees have been established by recent work \cite{POLICEd_RL}, whose POLICEd RL approach designs a repulsive buffer to enforce constraint satisfaction in closed-loop with a black-box system. However, work \cite{POLICEd_RL} along with most other safe RL works such as \cite{safe_exploration, Optlayer, sablas, yang2023model} are limited to constraints of relative degree $1$.
In contrast, our approach enforces inviolable \emph{constraints of high relative degree} in closed-loop with a learned control policy while exclusively using a \emph{black-box model} of the system dynamics.

The \emph{relative degree of a constraint} describes how many times a constraint needs to be differentiated before a control input appears in its expression. The higher the relative degree, the more inertia the constraint has and the more challenging its satisfaction is \cite{ZCBF}. Inspired by the recent extensions of CBFs to high relative degrees \cite{Exponential_CBF, HOCBF, ZCBF}, we propose a backstepping inspired approach~\cite{backstepping}, which is compatible with systems non-affine in control and with black-box systems contrary to these CBF methods.

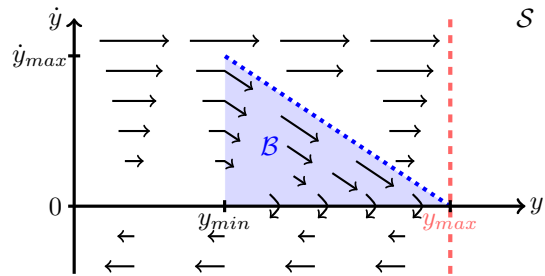
\begin{figure}[t!]
    \centering
    \begin{tikzpicture}[scale=1]
        \def\ymin{2}    
        \def\ymax{5}    
        \def\ydotmax{2} 
        \def\tick{0.08} 
        \def\width{6} 
        \def\height{\ydotmax + 0.5} 
        \def\arrowLen{0.6}

        \filldraw[fill=blue!15, draw=white] (\ymin, 0) -- (\ymax, 0) -- (\ymin, \ydotmax) -- (\ymin, 0);
        \draw[blue, ultra thick, dotted] (\ymax, 0) -- (\ymin, \ydotmax);
        \node at (1.3*\ymin, 0.4*\ydotmax) {\textcolor{blue}{$\mathcal{B}$}};
        
        \node at (\width, \height) {$\mathcal{S}$};
        \draw[very thick, ->] (-\tick, 0) -- (\width, 0);
        \node at (\width+0.15, 0) {$y$};
        \draw[very thick, ->] (0, -0.45*\ydotmax) -- (0, \height);
        \node at (-0.25, \height) {$\dot y$};
        \node at (-0.25, 0) {$0$};
        \draw[ultra thick, dashed, red!60] (\ymax, -0.45*\ydotmax) -- (\ymax, -0.3);
        \draw[ultra thick, dashed, red!60] (\ymax, -0.15) -- (\ymax, \height);

        \draw[very thick, black] (-\tick, \ydotmax) -- (\tick, \ydotmax);
        \node at (-0.45, \ydotmax) {$\dot y_{max}$};
        \draw[very thick, black] (\ymin, -\tick) -- (\ymin, \tick);
        \node at (\ymin, -0.21) {$y_{min}$};
        \draw[very thick, black] (\ymax, -\tick) -- (\ymax, \tick);
        \node at (\ymax, -0.21) {\textcolor{red!60}{$y_{max}$}};

        \foreach \y in {0.3*\ydotmax, 0.5*\ydotmax, 0.7*\ydotmax, 0.9*\ydotmax, 1.1*\ydotmax}
            \def\len{0.7*\arrowLen * \y}
            \draw[thick, ->] (0.4*\ymin -0.5*\len, \y) -- (0.4*\ymin + 0.5*\len, \y);

        \foreach \y in {0.3*\ydotmax, 0.5*\ydotmax, 0.7*\ydotmax, 0.9*\ydotmax}
        {
            \def\len{0.7*\arrowLen * \y}
            \draw[thick, ->] (\ymin-0.5*\len, \y) -- (\ymin, \y) -- (\ymin + 0.5*\len, \y - 0.33*\len);
        }
        \foreach \y in {1.1*\ydotmax}
            \def\len{0.7*\arrowLen * \y}
            \draw[thick, ->] (\ymin-0.5*\len, \y) -- (\ymin + 0.5*\len, \y);
            
        \foreach \y in {0.2*\ydotmax, 0.4*\ydotmax, 0.6*\ydotmax}
            \def\len{0.7*\arrowLen * \y}
            \draw[thick, ->] (1.5*\ymin - 0.5*\len, \y) -- (1.5*\ymin + 0.5*\len, \y - 0.67*\len);
        \def\y{0.4*\ydotmax}
        \def\len{0.7*\arrowLen * \y}
        \draw[thick, ->] (1.8*\ymin - 0.5*\len, \y-0.18*\ydotmax) -- (1.8*\ymin + 0.5*\len, \y-0.18*\ydotmax - 0.67*\len);
        \def\y{0.6*\ydotmax}
        \def\len{0.7*\arrowLen * \y}
        \draw[thick, ->] (2*\ymin - 0.5*\len, \y-0.3*\ydotmax) -- (2*\ymin + 0.5*\len, \y-0.3*\ydotmax - 0.67*\len);

        \foreach \y in {0.9*\ydotmax, 1.1*\ydotmax}
            \def\len{0.7*\arrowLen * \y}
            \draw[thick, ->] (1.6*\ymin-0.5*\len, \y) -- (1.6*\ymin + 0.5*\len, \y);

        \foreach \y in {0.3*\ydotmax, 0.5*\ydotmax, 0.7*\ydotmax, 0.9*\ydotmax, 1.1*\ydotmax}
            \def\len{0.7*\arrowLen * \y}
            \draw[thick, ->] (2.2*\ymin-0.5*\len, \y) -- (2.2*\ymin + 0.5*\len, \y);
            
        \foreach \x in {1.3*\ymin, 1.62*\ymin, 1.94*\ymin, 2.25*\ymin}
            \draw[thick, ->] (\x, 0.08*\ydotmax) .. controls (\x+0.08*\ymin, 0) .. (\x, -0.08*\ydotmax);

        \foreach \y in {-0.2*\ydotmax, -0.4*\ydotmax}
            \def\len{0.1*\arrowLen + 0.7*\arrowLen * \y}
            \foreach \x in {0.4*\ymin, \ymin, 1.6*\ymin, 2.2*\ymin}
                \draw[thick, ->] (\x, \y) -- (\x - \len, \y);
    
    \end{tikzpicture}
    \caption{Phase portrait of constrained output $y$ illustrating our High Relative Degree POLICEd RL method on a system of relative degree $2$. To prevent states from violating constraint $y \leq y_{max}$ (\textcolor{red}{red dashed line}), our policy guarantees that trajectories entering buffer region $\mathcal{B}$ (\textcolor{blue!60}{blue}) cannot leave it through its upper bound (\textcolor{blue}{blue dotted line}). Our policy makes $\ddot y$ sufficiently negative in buffer $\mathcal{B}$ to bring $\dot y$ to $0$ in all trajectories entering $\mathcal{B}$. Once $\dot y < 0$, trajectories cannot approach the constraint.
    Due to the states' inertia, it is physically impossible to prevent all constraint violations. For instance, $y = y_{max}$, $\dot y >> 1$ will yield $y > y_{max}$ at the next timestep. Hence, we only aim at guaranteeing the safety of trajectories entering buffer $\mathcal{B}$.}
    \label{fig: schema}
\end{figure}

To learn our safe controller, we draw inspiration from the POLICEd RL method of \cite{POLICEd_RL} and transform the state space surrounding the affine state constraint into a buffer region that cannot be crossed. We overcome the major limitation of \cite{POLICEd_RL} by extending POLICEd RL to constraints of high relative degree.
To dissipate the inertia of this high relative degree constraint, our key insight is to extend the buffer into the dimensions of the unactuated derivatives of the state constraint.
While this task appears arduous, these derivatives are usually accessible since related to the states.
In this buffer region, we train the policy to dissipate the state's inertia to progressively slow its progression towards the constraint, as illustrated in Fig.~\ref{fig: schema}. This controller guarantees that trajectories entering the buffer do not violate the constraint. Since inertia cannot be dissipated instantly, some constraint violations are physically impossible to prevent.
Inspired by \cite{POLICEd_RL}, to easily verify the dissipative character of the controller in the buffer, we use the POLICE algorithm~\cite{police} to generate an affine policy over the buffer region.

In summary, our contributions in this work are as follows.

\begin{enumerate}[topsep=0pt, partopsep=0pt, leftmargin=10pt]
    \item We introduce High Relative Degree POLICEd RL, a novel RL framework to guarantee satisfaction of an affine state constraint of high relative degree using a black-box model of the system in closed-loop with a trained policy.
    \item We provide comprehensive proof, and we detail our method to evaluate our trained policy while directly guaranteeing constraint satisfaction.
    \item We demonstrate the safety guarantees of our approach in a number of simulation studies involving an inverted pendulum and a space shuttle landing.
\end{enumerate}

The remainder of this work is organized as follows.
In Section~\ref{sec: framework}, we introduce our problem formulation along with our framework.
In Section~\ref{sec: constraint}, we establish the theoretical guarantees of our approach in enforcing the satisfaction of a high relative degree constraint.
In Section~\ref{sec: simulations}, we illustrate our method on the Gym inverted pendulum and on a space shuttle landing scenario. Appendices~\ref{subsec: SSID} and \ref{subsec: lemmata} contain supporting lemmata and implementation details for our simulations.

\textit{Notation:}
We denote the positive integer interval from $a \in \mathbb{N}$ to $b \in \mathbb{N}$ inclusive by $[\![a, b]\!]$.
We denote the component $i \in [\![1,n]\!]$ of a vector $x \in \mathbb{R}^n$ by $x_i$ and the vector composed of components $i$ to $j > i$ by $x_{i:j}$. 
The set of nonnegative real numbers is $\mathbb{R}^+$.
We denote the $k^{th}$ time derivative of a function $y$ by $y^{(k)} = \frac{d^k y}{dt^k}$.
If $x, y \in \mathbb{R}^n$, then $x \leq y$ denotes the element-wise inequalities $x_i \leq y_i$ for all $i \in [\![1,n]\!]$.

\section{Framework}\label{sec: framework}

We consider a black-box deterministic system
\begin{equation}\label{eq: nonlinear dynamics}
    \dot x(t) = f\big( x(t), u(t) \big), \quad u(t) \in \mathcal{U}, \quad x(0) \in \mathcal{X},
\end{equation}
with state space $\mathcal{X} \subseteq \mathbb{R}^n$ and admissible control set $\mathcal{U} \subseteq \mathbb{R}^m$. 
We consider dynamics \eqref{eq: nonlinear dynamics} to be an implicit black-box, meaning that we can evaluate $f$ but we do not have any explicit knowledge or analytical form of $f$. This is similar to the online RL setting where $f$ is a simulator or a robot.

We assume that the system safety constraint is captured by a single affine inequality on output
\begin{equation}\label{eq: constraint}
    y(t) := C x(t) \leq y_{max}, \quad \text{for all}\ t \geq 0,
\end{equation}
with $C \in \mathbb{R}^{1 \times n}$ and $y_{max} \in \mathbb{R}$.
We model our deterministic feedback policy $u(t) = \pi_\theta\big( x(t) \big) \in \mathcal{U}$ by a deep neural network parameterized by $\theta$. Our objective is to train policy $\pi_\theta$ to respect constraint~\eqref{eq: constraint} and maximize the following expected discounted reward 
\begin{equation}\label{eq: expected reward}
    \underset{\theta}{\max}\, \mathcal{G}(\pi_\theta) := \hspace{-2mm} \underset{x_0 \sim \rho_0}{\mathbb{E}} \hspace{-1mm} \int_0^{\infty} \hspace{-3mm} \gamma^t R\big( x(t), \pi_\theta(x(t))\big) dt \hspace{2mm} \text{s.t.}\ \eqref{eq: constraint}, \hspace{2mm}
\end{equation}
where $\gamma \in (0,1]$ is a discount factor, $R$ a reward function, and $\rho_0$ the distribution of initial states. The only stochasticity in our setting comes from the initial state sampling $x_0 \sim \rho_0$. We emphasize that constraint~\eqref{eq: constraint} is a \textit{hard constraint} to be respected at all times.
Contrary to previous work~\cite{POLICEd_RL}, we assume that constraint~\eqref{eq: constraint} has a \emph{relative degree} at least $2$.

\begin{definition}\label{def: relative degree}
    The \emph{relative degree} $r$ of output $y$ \eqref{eq: constraint} for dynamics~\eqref{eq: nonlinear dynamics} is the order of its input-output relationship, i.e., $r := \min\big\{ p \in \mathbb{N} : \frac{\partial}{\partial u} \frac{\partial^p y}{\partial t^p}(t) \neq 0 \ \text{for all}\ x \in \mathcal{X} \big\}$ \emph{\cite{book_nonlinear}}.
\end{definition}

In simpler words, the relative degree is the minimal number of times output $y$ has to be differentiated until control input $u$ appears. As argued in \cite{feedback_linearization}, relative degree $r$ can be obtained by first-order principles without the knowledge of dynamics $f$. Hence, knowing $r$ is compatible with our black-box model of $f$.
Assuming $r \geq 2$, we have
\begin{equation}\label{eq: rel deg 1}
    \frac{\partial}{\partial u} \frac{\partial y(t)}{\partial t} = \frac{\partial \dot y(t)}{\partial u} = \frac{\partial}{\partial u} C \dot x(t) = C \frac{\partial f(x,u)}{\partial u} = 0,
\end{equation}
for all $x \in \mathcal{X}$. Taking one further time derivative yields
\begin{align*}
    \ddot y(t) &= C \ddot x(t) = C \frac{\partial f(x,u)}{\partial t} \\
    &= C \frac{\partial f(x,u)}{\partial x}\frac{\partial x}{\partial t}  + \underbrace{C \frac{\partial f(x,u)}{\partial u}}_{=\, 0\ \text{from}\ \eqref{eq: rel deg 1}} \frac{\partial u}{\partial t} = C \mathcal{D}\! f(x,u),
\end{align*}
where $\mathcal{D}\! f(x,u) := \frac{\partial f(x,u)}{\partial x} f(x,u)$ differs from a Lie derivative since $f$ depends not only on $x$ but also on $u$ since \eqref{eq: nonlinear dynamics} is \emph{not control affine}. 
Iterating this process yields 
\begin{equation*}
    y^{(k)}(t) = C \mathcal{D}^{k-1}\! f(x,u)
\end{equation*}
for all $k \in [\![0, r-1]\!]$ with 
\begin{equation*}
    \mathcal{D}^{k+1}\! f := \frac{\partial \mathcal{D}^k\!f}{\partial x} f \quad \text{and} \quad \mathcal{D}^0\! f := f.
\end{equation*}

Having $r \geq 2$ means that $u$ does not appear in the expression of $\dot y$. Thus, a change in control input will not immediately modify $y$. We follow \cite{ZCBF} and refer to the unactuated derivatives of $y$ $\big( y^{(k)}$ for $k \in [\![1, r-1]\!] \big)$ as generalized inertia, by analogy to inertia in kinematic systems. To enforce constraint \eqref{eq: constraint}, we need to dissipate this generalized inertia before reaching constraint line $y = y_{max}$. To easily assess this generalized inertia, we make the following assumption.

\begin{assumption}\label{assum: T}
    There exists an invertible map $T$ between state $x \in \mathbb{R}^n$ and $s \in \mathbb{R}^n$, whose first $r$ components are $s_1 = y$, $s_2 = \dot y$, ..., $s_r = y^{(r-1)}$, where $y$ is output~\eqref{eq: constraint}.
    Transformation $T(x) = s$ gives rise to an equivalent state space $\mathcal{S} := T(\mathcal{X})$.
\end{assumption}

Note that this is a rather mild assumption. Indeed, a transformation $s = T(x)$ always exists since $y = Cx$ and thus $s_{k+2} = y^{(k+1)} = C \frac{\partial^k \dot x}{\partial t} = C \frac{\partial^k f}{\partial t}(x,u)$. Assumption~\ref{assum: T} is required for the invertibility of $T$ and the fact that $T$ can be determined without knowledge of black-box dynamics $f$. 
For typical control systems as studied in Section~\ref{sec: simulations}, $T$ is a simple function satisfying Assumption~\ref{assum: T}.
Following Assumption~\ref{assum: T}, we now have two equivalent state representations: $x$ denotes the original state of system \eqref{eq: nonlinear dynamics}, while $s$ denotes the transformed state composed of the iterated derivatives of constrained output $y$.

We can now formally define our problem of interest.

\begin{problem}\label{prob: continuous}
    Given:
    \begin{enumerate}[topsep=0pt, parsep=1pt, partopsep=-2pt]
        \item black-box control system \eqref{eq: nonlinear dynamics};
        \item state space $\mathcal{X} \subseteq \mathbb{R}^n$;
        \item admissible input set $\mathcal{U} \subseteq \mathbb{R}^m$;
        \item affine constraint \eqref{eq: constraint} of relative degree $r \geq 2$;
        \item neural network policy $\pi_\theta(x)$ parameterized by $\theta$;
        \item invertible transformation $T$ of Assumption~\ref{assum: T};
    \end{enumerate}
    Our goal is to solve $\theta^* = \underset{\theta}{\arg\max}\ \mathcal{G}(\pi_\theta)$ s.t. \eqref{eq: nonlinear dynamics} and \eqref{eq: constraint}.
\end{problem}

\section{Constrained Reinforcement Learning}\label{sec: constraint}

In this section, we devise a method to solve Problem~\ref{prob: continuous} by designing a buffer preventing trajectories from breaching the constraint, as illustrated in Fig.~\ref{fig: schema}.
To enforce high relative degree constraint $y \leq y_{max}$, our safe controller must dissipate its generalized inertia before reaching constraint line $y = y_{max}$. We force this dissipation in a buffer region $\mathcal{B}$ as illustrated in Fig.~\ref{fig: schema}. We will first build such a buffer, then show that trajectories entering $\mathcal{B}$ cannot breach the constraint.

\subsection{Buffer Design}\label{subsec: buffer}

We formalize the concept of Fig.~\ref{fig: schema} and build buffer $\mathcal{B}$ as the state space region where our controller will dissipate generalized inertia $\dot y$,...,$y^{(r-1)}$ to prevent violation of constraint~\eqref{eq: constraint}. To adapt $\mathcal{B}$ to this task, we design it in state space $\mathcal{S}$, because Assumption~\ref{assum: T} states that the coordinates of $\mathcal{S}$ are the generalized inertia components $y^{(k)}$. 

Assume we can choose $s \in \mathcal{B}$ and let us investigate what bounds the components of $s$ should satisfy to remain in $\mathcal{B}$. Following Assumption~\ref{assum: T}, the first component of $s$ is $s_1 = y$, and thus should satisfy $s_1 \leq y_{max}$ to respect constraint~\eqref{eq: constraint}. We choose a lower bound for $s_1$ as $y_{min} < y_{max}$. 

Following Assumption~\ref{assum: T}, the second component of $s$ is $s_2 = \dot y$. To maintain $y \leq y_{max}$, we need $\dot y \leq 0$ when $y = y_{max}$.
Requiring $\dot y \leq 0$ for all $s \in \mathcal{B}$ is the approach of \cite{POLICEd_RL} but restricts $\mathcal{B}$ to only include states already moving away from upper bound $y_{max}$, whereas we want to slow down and stop trajectories going towards $y_{max}$. Thus, we must allow states with $\dot y > 0$ in $\mathcal{B}$. Let $\dot y_{max} > 0$ be the maximal velocity in $\mathcal{B}$. Our controller will later require $\mathcal{B}$ to be a polytope. Thus, we naturally define the upper bound on $\dot y = s_2$ as 
\begin{equation}\label{eq: beta}
    s_2^{max} := \beta (y_{max} - y)  \quad \text{with} \quad \beta := \frac{\dot y_{max}}{y_{max} - y_{min}},
\end{equation}
so that $s_2^{max}(y) = \dot y_{max}$ when $y = y_{min}$ and $s_2^{max} = 0$ when $y = y_{max}$, as illustrated in Fig.~\ref{fig: schema}. We choose a lower bound $s_2^{min} \leq 0$ so that $s_2^{min} \leq s_2^{max}(y)$ for all $y$.
Note that 
\begin{equation}\label{eq: y dif ineq}
    s_2 = \dot y \leq s_2^{max}(y) = \beta(y_{max} - y)   
\end{equation}
is a differential inequality on $y$ that we designed to maintain $y \leq y_{max}$. To enforce \eqref{eq: y dif ineq} we need a control input, but only $y^{(r)}$ is actuated. Our key idea is then to make our controller enforce a differential inequality on $y^{(r)}$, whose iterated integrations will lead to \eqref{eq: y dif ineq}.

Working backwards, we differentiate \eqref{eq: y dif ineq} into $\ddot y \leq -\beta \dot y$, i.e., $s_3 \leq -\beta s_2$. Thus, we choose bounds $s_3^{min} < s_3^{max}(s) := -\beta s_2$. Iterating this process until $y^{(r)}$ leads to lower and upper bounds $\underline{b}$ and $\overline{b}$ on the first $r$ components of $s \in \mathcal{B}$. Then, $s_{1:r} \in \big[ \underline{b},\, \overline{b}(s) \big]$ element-wise with
\begin{align}
    \underline{b} &:= \big[ y_{min},\hspace{8.5mm} s_2^{min},\hspace{8.5mm} s_3^{min},\ \hdots,\hspace{3.5mm} s_{r}^{min} \hspace{2mm}\big], \label{eq: buffer lower bound}\\
    \overline{b}(s) &:= \big[ y_{max},\ \beta (y_{max} -s_1),\ -\beta s_2,\ \hdots,\ -\beta s_{r-1}\big]. \label{eq: buffer upper bound}
\end{align}
The remaining $n-r$ components of $s \in \mathcal{B}$ are not derivatives of $y$ and thus are not involved in the constraint enforcement process. As mentioned previously, we will need $\mathcal{B}$ to be a polytope, hence we choose to bound the last $n-r$ components of $s \in \mathcal{B}$ by a polytope $\mathcal{P} \subset \mathbb{R}^{n-r}$ so that
\begin{equation}\label{eq: buffer}
    \mathcal{B} := \left\{ s \in \mathcal{S}: s_{1:r} \in \big[ \underline{b},\ \overline{b}(s) \big],\ s_{r+1:n} \in \mathcal{P} \right\}.
\end{equation}
Note that the bounds we just derived only delimit region $\mathcal{B}$ in $\mathcal{S}$, but without adequate control input, trajectories will not respect these bounds.
Similarly, the differential inequalities obtained above only reflect the desired dynamics that we want to enforce with our controller.
On the other hand, bounds $s_{1:r} \geq \underline{b}$ and $s_{r+1:n} \in \mathcal{P}$ will not be specifically enforced by the controller, but should be designed to encompass all trajectories to be safeguarded from $y_{max}$.

By design, buffer $\mathcal{B}$ of \eqref{eq: buffer} is then a compact convex polytope with a finite number $N$ of vertices gathered in the set $\mathcal{V}\big(\mathcal{B}\big) := \big\{v^1, \hdots, v^{N} \big\}$\footnote{The number of vertices of $\mathcal{B}$ is related to the Fibonacci sequence (see Lemma~\ref{lemma: number vertices}).}.

\subsection{Controller Design}

Let us now design a controller to maintain trajectories in buffer $\mathcal{B}$. Inspired by \cite{POLICEd_RL}, we model our control policy with a POLICEd neural network $\mu_\theta := \pi_\theta \circ T : \mathcal{S} \rightarrow \mathcal{U}$, with continuous piecewise affine activation functions such as ReLU \cite{police}.
This restriction is of little concern as ReLU is the most commonly used activation function.
This POLICEd architecture allows us to make the outputs of $\mu_\theta$ affine over a polytopic region of state space $\mathcal{S}$, which we chose to be $\mathcal{B}$. Then, there exist matrices $D_{\theta} \in \mathbb{R}^{m \times n}$ and $e_{\theta} \in \mathbb{R}^m$ such that
\begin{equation}\label{eq: police}
    \mu_\theta(s) = D_{\theta} s + e_{\theta} \quad \text{for all}\ s \in \mathcal{B}.
\end{equation}

Since buffer $\mathcal{B}$ and policy $\mu_\theta$ are both in space $\mathcal{S}$ and not $\mathcal{X}$, we need to calculate the state dynamics in $\mathcal{S}$. State $s = T(x)$ following controller $\mu_\theta$ satisfies
\begin{equation*}
    \dot s = \frac{\partial T}{\partial t}(x) = \frac{\partial T}{\partial x}\frac{\partial x}{\partial t} = \frac{\partial T}{\partial x}(x) f\Big(x,\, \mu_\theta\big( T(x) \big) \Big).
\end{equation*}
Then, by defining the map
\begin{equation*}
    \widetilde{f}(s; \mu_\theta) := \frac{\partial T}{\partial x}\big( T^{-1}(s)\big) f\big( T^{-1}(s),\, \mu_\theta(s)\big),
\end{equation*}
we can write $\dot s(t) = \widetilde{f}\big( s(t); \mu_\theta \big)$.
We are mostly interested in the dynamics of the actuated derivative of output $y$:
\begin{equation}\label{eq: y^(r)}
    y^{(r)}(t) = \frac{\partial y^{(r-1)}}{\partial t}(t) = \frac{\partial s_r}{\partial t}(t) = \widetilde{f}_r\big( s(t); \mu_\theta\big),
\end{equation}
where $s_r$ and $\widetilde{f}_r$ denote the $r^{th}$ component of $s$ and $\widetilde{f}$ respectively.
If dynamics~\eqref{eq: y^(r)} were known and affine, their coupling with affine policy $\mu_\theta$ on $\mathcal{B}$ would lead to a simple constraint enforcement process.
However, dynamics~\eqref{eq: y^(r)} are a black-box and possibly nonlinear. We will thus use an affine approximation of \eqref{eq: y^(r)} inside using the following definition.

\begin{definition}\label{def: epsilon}
    An \emph{approximation measure} $\varepsilon$ of dynamics~\eqref{eq: y^(r)} in buffer~\eqref{eq: buffer} is any $\varepsilon \geq 0$ for which there exists any matrices $A \in \mathbb{R}^{n \times n}$, $B \in \mathbb{R}^{n \times m}$, and $c \in \mathbb{R}^n$ such that
    \begin{equation}\label{eq: approximation}
        \big| \widetilde{f}_r\big( s; \mu_\theta\big) - C\big(A s + B \mu_\theta(s) + c\big) \big| \leq \varepsilon,
    \end{equation}
    for all $s \in \mathcal{B}$.
\end{definition}

Intuitively, the value of $\varepsilon$ quantifies how far from affine is function $\widetilde{f}_r$ over buffer $\mathcal{B}$. 
Having access to map $T$, controller $\mu_\theta$ and to a black-box model of $f$, we can evaluate $\widetilde{f}_r$ and compute $\varepsilon$ using linear least square approximation \cite{linear_least_squares}. Since $\varepsilon$ is estimated from data, it might not verify~\eqref{eq: approximation} for some $s \in \mathcal{B}$ absent from the dataset. To satisfy Definition~\ref{def: epsilon}, we need to \emph{over}-approximate $\varepsilon$ since any upper bound will verify \eqref{eq: approximation}. With such an upper bound, we will guarantee the satisfaction of constraint~\eqref{eq: constraint} with actual dynamics~\eqref{eq: nonlinear dynamics}.

We now establish our central result demonstrating how to \emph{guarantee} satisfaction of constraint \eqref{eq: constraint} by black-box environment~\eqref{eq: nonlinear dynamics} armed \emph{only} with an approximation measure $\varepsilon$ and \emph{without} knowing $A$, $B$, $c$, $f$, or $\widetilde{f}_r$.

\begin{theorem}\label{thm: admissible trajectories}
    Assume that for some approximation measure $\varepsilon$, dissipation condition 
    \begin{equation}\label{eq: dissipation}
        \widetilde{f}_r(v; \mu_\theta) \leq -2\varepsilon -\beta v_r, 
    \end{equation}
    holds for all $v \in \mathcal{V}\big(\mathcal{B}\big)$, where $v_r$ is the $r^{th}$ component of $v$ and $\beta$ comes from \eqref{eq: beta}. If a trajectory $s$ steered by $\mu_\theta$ verifies 
    \begin{equation}\label{eq: strictly in B}
        s_{1:r}(t_0) < \overline{b}\big( s(t_0) \big)   
    \end{equation}
    for some $t_0 \geq 0$, and satisfies 
    \begin{equation}\label{eq: stays in B}
        s_{1:r}(t) \geq \underline{b} \quad \text{and} \quad s_{r+1:n}(t) \in \mathcal{P}
    \end{equation}
    for all $t \in [t_0, t_1)$, then $s_{1:r}(t) < \overline{b}\big( s(t) \big)$ for all $t \in [t_0, t_1)$.
\end{theorem}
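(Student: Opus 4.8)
### Proof Plan

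The plan is to argue by contradiction using a "first escape time" argument on the upper bound. Suppose the conclusion fails: then the set of times in $[t_0,t_1)$ at which some component of $s_{1:r}$ reaches its upper bound is nonempty, and by continuity of the trajectory (since $\mu_\theta$ is continuous and $\widetilde f$ is continuous, solutions are $C^1$) we can define $t^* := \inf\{t \in [t_0,t_1) : s_{1:r}(t) \not< \overline b(s(t))\}$. By \eqref{eq: strictly in B} we have $t^* > t_0$, and at $t^*$ at least one component $k \in [\![1,r]\!]$ satisfies $s_k(t^*) = \overline b_k(s(t^*))$ while $s_j(t) \le \overline b_j(s(t))$ for all $j$ and all $t \le t^*$, with strict inequality on $[t_0,t^*)$. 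The goal is to show this forces $\dot s_k(t^*) < \dot{\overline b}_k(s(t^*))$ (i.e., $s_k$ is strictly decreasing relative to its moving bound at the touch time), which contradicts $s_k$ having just caught up to $\overline b_k$ from below.

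The key computation is to evaluate, at the candidate touch time $t^*$, the quantity $\dot s_k - \dot{\overline b}_k$ for the component $k$ that touches. I would split into two cases. \textbf{Case $k < r$:} here $\overline b_{k+1}(s) = -\beta s_k$, and more relevantly the structure of the bounds gives $\overline b_k(s) = -\beta s_{k-1}$ for $k \ge 3$ (and $\overline b_2(s) = \beta(y_{max}-s_1)$ for $k=2$). The point is that the definition of the bounds was reverse-engineered precisely so that $\dot{\overline b}_k(s) = \overline b_{k+1}(s)$ when $s$ sits on the bound: e.g.\ $\frac{d}{dt}\big(\beta(y_{max}-s_1)\big) = -\beta s_2$, which equals $\overline b_3(s)$, etc. Meanwhile $\dot s_k = s_{k+1}$. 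So for $k<r$ the inequality $\dot s_k - \dot{\overline b}_k < 0$ reduces to $s_{k+1}(t^*) < \overline b_{k+1}(s(t^*))$, which holds — strictly — because component $k+1$ has not touched its bound at $t^*$ (if it had, we would take that larger index instead; more carefully, one shows the $r$-th component must be the relevant one, or handles the indices inductively from the top). \textbf{Case $k = r$:} here $\dot s_r = y^{(r)} = \widetilde f_r(s;\mu_\theta)$ and $\dot{\overline b}_r(s) = -\beta \dot s_{r-1} = -\beta s_r$, so we need $\widetilde f_r(s(t^*);\mu_\theta) < -\beta s_r(t^*)$. This is where the dissipation condition \eqref{eq: dissipation} and the approximation measure $\varepsilon$ enter.

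For Case $k=r$, the argument is: at time $t^*$ the state $s(t^*)$ lies in the closed polytope $\mathcal B$ (it satisfies $s_{1:r} \le \overline b(s)$, $s_{1:r} \ge \underline b$ from \eqref{eq: stays in B}, and $s_{r+1:n}\in\mathcal P$), so $s(t^*)$ is a convex combination $\sum_i \lambda_i v^i$ of the vertices $v^i \in \mathcal V(\mathcal B)$. Using the affine approximation from Definition~\ref{def: epsilon}, write $\widetilde f_r(s;\mu_\theta) \le C(As+B\mu_\theta(s)+c) + \varepsilon$; since $\mu_\theta$ is affine on $\mathcal B$ by \eqref{eq: police}, the map $s \mapsto C(As+B\mu_\theta(s)+c)$ is affine, so its value at $s(t^*)$ equals $\sum_i \lambda_i\, C(Av^i + B\mu_\theta(v^i)+c)$. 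At each vertex, $C(Av^i+B\mu_\theta(v^i)+c) \le \widetilde f_r(v^i;\mu_\theta) + \varepsilon \le -2\varepsilon - \beta v^i_r + \varepsilon = -\varepsilon - \beta v^i_r$ by \eqref{eq: dissipation} and \eqref{eq: approximation} again. Taking the convex combination and the $r$-th coordinate of $s(t^*) = \sum_i \lambda_i v^i$ gives $\widetilde f_r(s(t^*);\mu_\theta) \le \varepsilon + \big(-\varepsilon - \beta s_r(t^*)\big) = -\beta s_r(t^*)$. To get the strict inequality needed to contradict the touch, I would use that \eqref{eq: strictly in B} and the infimum definition put $t^* > t_0$ with the trajectory strictly inside just before, or absorb a strict margin by noting the chain of inequalities has slack $\varepsilon$ at an interior-of-dataset point; alternatively, state the conclusion of the theorem as a non-strict-to-strict propagation handled by the $2\varepsilon$ (rather than $\varepsilon$) buffer in \eqref{eq: dissipation}, which is exactly why the factor $2$ appears.

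The main obstacle I anticipate is the bookkeeping at the touch time when several components could in principle reach their bounds simultaneously, and making the "first touch must be component $r$, or propagates downward" reasoning rigorous: one must verify that if $s_k$ touches $\overline b_k$ with $k<r$ then necessarily $s_{k+1}$ has already touched $\overline b_{k+1}$ (so one can pass to the top index $r$), which follows from $\dot s_k = s_{k+1}$, $\dot{\overline b}_k = \overline b_{k+1}$ on the bound, and the fact that $s_k - \overline b_k$ cannot increase to $0$ unless its derivative $s_{k+1} - \overline b_{k+1}$ is $\ge 0$ at the touch — but that derivative is $\le 0$ by the inductive hypothesis, forcing equality and hence $s_{k+1}(t^*) = \overline b_{k+1}(s(t^*))$. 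Iterating reduces everything to $k=r$, where the dissipation condition closes the argument. The remaining technical care is justifying differentiability of $t \mapsto \overline b_k(s(t))$ (immediate, since $\overline b$ is affine in $s$ and $s$ is $C^1$) and handling the half-open interval endpoint $t_1$, which is harmless since we only ever evaluate at $t^* < t_1$.
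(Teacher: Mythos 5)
Your overall route is genuinely different from the paper's. The paper proves the theorem constructively in three lemmas: Lemma~\ref{lemma: convex} extends the vertex condition \eqref{eq: dissipation} to $\dot s_r \leq -\beta s_r$ on all of $\mathcal{B}$ exactly as in your Case $k=r$ computation (your $\varepsilon$ bookkeeping matches \eqref{eq: vertex repulsion}--\eqref{eq: vertices eps}), and Lemmas~\ref{lemma: B closed} and \ref{lemma: B open} then \emph{integrate} the resulting differential inequalities via the comparison lemma to obtain explicit exponential envelopes \eqref{eq: y solution}--\eqref{eq: y^k solution}, from which the strict bounds are read off directly. You instead run a first-touch-time contradiction with a cascade showing that a touch at component $k<r$ forces a simultaneous touch at component $k+1$, pushing everything up to $k=r$. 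That cascade is sound: with $h_k := s_k - \overline{b}_k(s)$ one has $\dot h_k = h_{k+1}$ identically for $k \in [\![2, r-1]\!]$, and the $k=1$ case also works because touching $s_1 = y_{max}$ forces $\dot s_1(t^*) = s_2(t^*) \geq 0$ while $\overline{b}_2(s(t^*)) = 0$ there. The paper's integration buys quantitative decay rates; yours avoids solving any differential inequality.

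The one genuine soft spot is the final contradiction at $k=r$, which you flag yourself but do not correctly resolve: two of your three proposed fixes fail. At the touch time you obtain only $\widetilde f_r\big(s(t^*);\mu_\theta\big) \leq -\beta s_r(t^*)$, i.e., $\dot h_r(t^*) \leq 0$, which is \emph{not} inconsistent with $h_r$ reaching $0$ from below (that only forces $\dot h_r(t^*) \geq 0$, hence $\dot h_r(t^*) = 0$, and no contradiction). There is no leftover slack to upgrade this to a strict inequality: the $2\varepsilon$ in \eqref{eq: dissipation} is consumed exactly, once passing from $\widetilde{f}_r$ to the affine surrogate at each vertex and once passing back at $s(t^*)$, so its role is to survive two applications of \eqref{eq: approximation}, not to provide strictness; likewise "slack at an interior-of-dataset point" is not a usable quantity. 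The correct closure is the completion of your first suggestion: $\dot h_r(t) \leq 0$ holds on all of $[t_0, t^*]$ (the trajectory lies in $\mathcal{B}$ there by \eqref{eq: stays in B} and the definition of $t^*$), so integrating gives $h_r(t^*) \leq h_r(t_0) < 0$ by \eqref{eq: strictly in B}, contradicting $h_r(t^*) = 0$. In other words, you must exploit the derivative bound over the whole interval rather than only at the touch instant; this is precisely the mechanism of the monotone function $g$ in the paper's Lemma~\ref{lemma: B closed}. With that repair, your proof is correct.
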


In simpler words, Theorem~\ref{thm: admissible trajectories} guarantees that trajectories entering buffer $\mathcal{B}$ below upper bound $\overline{b}$ cannot exit $\mathcal{B}$ through $\overline{b}$ as long as dissipation condition~\eqref{eq: dissipation} is satisfied. Theorem~\ref{thm: admissible trajectories} generates the bent arrows of the flow illustrated in Fig.~\ref{fig: schema} which prevent trajectories from violating constraint~\eqref{eq: constraint}. The major strength of our approach is that dissipation condition \eqref{eq: dissipation} only needs to be enforced at the vertices of $\mathcal{B}$ and thus does not require knowledge of $f$ or $\widetilde{f}_r$.

\begin{proof}[Proof of Theorem~\ref{thm: admissible trajectories}]
    The intuition behind this proof is to use the convexity of buffer $\mathcal{B}$ and affine approximation~\eqref{eq: approximation} to extend condition~\eqref{eq: dissipation} to the entire $\mathcal{B}$. By combining this condition with the specific design of upper bound~\eqref{eq: buffer upper bound}, we can derive bounds on the output derivatives $y$, ..., $y^{(r-1)}$ and show that they cannot cross upper bound $\overline{b}$.

    We divide the proof into three lemmas. First, we show in Lemma~\ref{lemma: convex} that condition \eqref{eq: dissipation} yields $\dot s_r \leq -\beta s_r$ for all $s$ in $\mathcal{B}$.
    This condition combined in Lemma~\ref{lemma: B closed} with \eqref{eq: stays in B} yields the differential inequalities to be respected by $y$, $\dot y$,..., $y^{(r-1)}$ as long as trajectory $s$ remains in $\mathcal{B}$.
    In Lemma~\ref{lemma: B open} these differential equations are then paired with initial conditions \eqref{eq: strictly in B} to obtain $s_{1:r}(t) < \overline{b}\big( s(t) \big)$ for all $t \in [t_0, t_1)$.
\end{proof}

\subsection{Supporting Lemmata}

We now extend dissipation condition~\eqref{eq: dissipation} from the vertices of buffer $\mathcal{B}$ to the whole set $\mathcal{B}$.

\begin{lemma}\label{lemma: convex}
    If for some approximation measure $\varepsilon$, condition~\eqref{eq: dissipation} holds for all $v \in \mathcal{V}(\mathcal{B})$, then controller $\mu_\theta$ yields 
    \begin{equation}\label{eq: y r repulsion}
        \dot s_r(t) \leq -\beta s_r(t), \quad \text{i.e.,} \quad y^{(r)}(t) \leq -\beta y^{(r-1)}(t),
    \end{equation}
    for all $s(t) \in \mathcal{B}$.
\end{lemma}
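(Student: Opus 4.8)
The plan is to exploit the polytopic structure of $\mathcal{B}$ together with the affine approximation from Definition~\ref{def: epsilon}. The key observation is that the map $s \mapsto C(As + B\mu_\theta(s) + c)$ is \emph{affine} on $\mathcal{B}$: indeed, on $\mathcal{B}$ the POLICEd policy satisfies $\mu_\theta(s) = D_\theta s + e_\theta$ by \eqref{eq: police}, so $C(As + B\mu_\theta(s) + c) = C(A + BD_\theta)s + C(Be_\theta + c)$ is affine in $s$. Similarly the right-hand side $-2\varepsilon - \beta s_r$ of the dissipation condition is affine in $s$. Therefore the function
\begin{equation*}
    g(s) := C\big(As + B\mu_\theta(s) + c\big) + \beta s_r
\end{equation*}
is affine on $\mathcal{B}$, and the hypothesis \eqref{eq: dissipation} rewritten via $|\widetilde f_r(v;\mu_\theta) - C(Av + B\mu_\theta(v) + c)| \le \varepsilon$ gives $g(v) = C(Av + B\mu_\theta(v) + c) + \beta v_r \le \widetilde f_r(v;\mu_\theta) + \varepsilon + \beta v_r \le -2\varepsilon - \beta v_r + \varepsilon + \beta v_r = -\varepsilon$ at every vertex $v \in \mathcal{V}(\mathcal{B})$.

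The second step is a convexity argument. Any $s \in \mathcal{B}$ is a convex combination $s = \sum_{i=1}^N \lambda_i v^i$ of the vertices, with $\lambda_i \ge 0$, $\sum_i \lambda_i = 1$. Since $g$ is affine on $\mathcal{B}$, $g(s) = \sum_i \lambda_i g(v^i) \le \sum_i \lambda_i (-\varepsilon) = -\varepsilon$. Hence $g(s) \le -\varepsilon$ for all $s \in \mathcal{B}$, i.e.
\begin{equation*}
    C\big(As + B\mu_\theta(s) + c\big) \le -\varepsilon - \beta s_r \quad \text{for all } s \in \mathcal{B}.
\end{equation*}

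The third step is to remove the approximation and recover a statement about the true dynamics. For $s(t) \in \mathcal{B}$ we have by \eqref{eq: y^(r)} that $\dot s_r(t) = \widetilde f_r(s(t);\mu_\theta)$, and by \eqref{eq: approximation}, $\widetilde f_r(s(t);\mu_\theta) \le C(As(t) + B\mu_\theta(s(t)) + c) + \varepsilon$. Chaining this with the bound from Step~2 gives $\dot s_r(t) \le -\varepsilon - \beta s_r(t) + \varepsilon = -\beta s_r(t)$, which is exactly \eqref{eq: y r repulsion} once we recall $s_r = y^{(r-1)}$ and $\dot s_r = y^{(r)}$ from Assumption~\ref{assum: T} and \eqref{eq: y^(r)}.

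The only subtle point — and the step I expect to require the most care — is making sure the affineness of $s \mapsto C(As + B\mu_\theta(s) + c)$ on $\mathcal{B}$ is correctly invoked: it hinges on the POLICE construction guaranteeing $\mu_\theta$ is \emph{exactly} affine on all of $\mathcal{B}$ (not merely approximately), which is why $\mathcal{B}$ was required to be a polytope and $\mu_\theta$ built with piecewise-affine activations. Everything else is a routine vertex-to-interior convexity interpolation plus two applications of the triangle inequality with $\varepsilon$; no differential-equation machinery is needed here, since Lemma~\ref{lemma: convex} is a purely pointwise statement about $\dot s_r$ whenever $s \in \mathcal{B}$, with the integration deferred to Lemmas~\ref{lemma: B closed} and \ref{lemma: B open}.
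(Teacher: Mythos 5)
Your proposal is correct and follows essentially the same route as the paper's own proof: a triangle-inequality step at the vertices combining \eqref{eq: approximation} with \eqref{eq: dissipation}, a convex-combination argument using the exact affineness of $\mu_\theta$ on $\mathcal{B}$ from \eqref{eq: police}, and a final triangle inequality to return to $\widetilde{f}_r$. Packaging the argument through the affine function $g(s) = C(As + B\mu_\theta(s) + c) + \beta s_r$ is only a cosmetic reorganization of the paper's computation, which instead carries the term $-\beta s_r$ through the convex combination explicitly.
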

\begin{proof}
    Since $\varepsilon$ is an approximation measure, there exist $A$, $B$ and $c$ verifying \eqref{eq: approximation} which we evaluate at $s = v \in \mathcal{V}\big(\mathcal{B}\big)$
    \begin{align}\label{eq: vertex repulsion}
        &C \big( Av + B\mu_\theta(v) + c \big) \nonumber \\
        &\quad \leq \big|C\big( Av + B\mu_\theta(v) + c\big) - \widetilde{f}_r( v; \mu_\theta)\big| + \widetilde{f}_r( v; \mu_\theta) \nonumber \\
        &\quad \leq \varepsilon -2\varepsilon -\beta v_r,
    \end{align}
    where the first inequality is a triangular inequality, the second follows from \eqref{eq: approximation} and \eqref{eq: dissipation}.
    Using the convexity of polytope $\mathcal{B}$ of vertices $\mathcal{V}\big(\mathcal{B}\big) = \big\{v^1, \hdots, v^{N}\big\}$, for any $s \in \mathcal{B}$, there exist $\alpha^1, \hdots, \alpha^{N} \in \mathbb{R}^+$ such that $\sum_{k = 1}^{N} \alpha^k = 1$ and $s = \sum_{k=1}^{N} \alpha^k v^k$. Controller \eqref{eq: police} applied at $s \in \mathcal{B}$ yields
    \begin{align}\label{eq: vertices eps}
        C \big( As + &B\mu_\theta(s) + c \big) = C \big( As + B(D_{\theta} s + e_{\theta}) + c \big) \nonumber \\
        &= C(A + B D_{\theta})s + C(B e_{\theta} + c) \nonumber  \\
        &= C (A + B D_{\theta})\sum_{k=1}^{N} \alpha^k v^k + C (B e_{\theta} + c)\sum_{k = 1}^{N} \alpha^k \nonumber  \\
        &= \sum_{k = 1}^{N} \alpha^k C \big( (A + B D_{\theta})v^k + B e_{\theta} + c \big) \nonumber \\
        &= \sum_{k = 1}^{N} \alpha^k C \big( A v^k + B \mu_\theta(v^k) + c \big) \nonumber \\
        &\leq \sum_{k = 1}^{N} \alpha^k \big(-\varepsilon - \beta v_r^k \big) = -\varepsilon \sum_{k = 1}^{N} \alpha^k - \beta \sum_{k = 1}^{N} \alpha^k v_r^k \nonumber \\
        &= -\varepsilon -\beta s_r,
    \end{align}
    where the only inequality comes from \eqref{eq: vertex repulsion} on each vertex $v^k$ and the last equality stems from the linear decomposition of component $r$ of state $s$ between component $r$ of vertices $v^k$. For any state $s \in \mathcal{B}$, \eqref{eq: y^(r)} yields
    \begin{align*}
        y^{(r)} &= \widetilde{f}_r(s, \mu_\theta) \\
        &\hspace{-1.5mm}\leq \big|\widetilde{f}_r(s, \mu_\theta)\! -\! C\big(\! A s\! +\!\! B\mu_\theta(s)\! +\! c\big)\! \big|\! +\! C\big(\! A s\! +\!\! B \mu_\theta(s) + c \big) \\
        &\hspace{-1.5mm}\leq \varepsilon - \varepsilon - \beta s_r = -\beta y^{(r-1)},
    \end{align*}
    where we first use the triangular inequality, then \eqref{eq: approximation} and \eqref{eq: vertices eps}, and the last equality comes from the definition of state $s$ in Assumption~\ref{assum: T}.
\end{proof}

Lemma~\ref{lemma: convex} uses the convexity of $\mathcal{B}$ and affine approximation~\eqref{eq: approximation} to extend \eqref{eq: dissipation}, valid \emph{only} at the vertices of $\mathcal{B}$, into \eqref{eq: y r repulsion}, valid all over $\mathcal{B}$. Without the POLICE algorithm \cite{police}, $\mu_\theta$ would not be affine over $\mathcal{B}$, and dissipation condition~\eqref{eq: dissipation} would need to be enforced everywhere on the buffer at a prohibitive computational cost.

\begin{lemma}\label{lemma: B closed}
    If \eqref{eq: y r repulsion} holds for all $s \in \mathcal{B}$, $s(t_0) \in \mathcal{B}$, and \eqref{eq: stays in B} holds for all $t \in [t_0, t_1)$ for some $t_1 > t_0 \geq 0$, then   
    \begin{equation}\label{eq: y solution}
        y(t) \leq \big(y(t_0) - y_{max}\big)e^{-\beta (t - t_0)} + y_{max}
    \end{equation}
    and 
    \begin{equation}\label{eq: y^k solution}
        y^{(k)}(t) \leq y^{(k)}(t_0) e^{-\beta (t-t_0)}
    \end{equation}
    for all $k \in [\![1, r-1]\!]$ and all $t \in [t_0, t_1)$.
\end{lemma}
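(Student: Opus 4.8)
The plan is to read the dissipation bound \eqref{eq: y r repulsion} as a scalar linear differential inequality on the top coordinate $s_r = y^{(r-1)}$, integrate it with a standard comparison argument, and then recover the bounds on the lower derivatives $y^{(r-2)},\dots,\dot y,y$ by integrating downward $r$ times, using at each level the matching component of the inclusion $s(t_0)\in\mathcal{B}$ as the boundary datum. One preliminary point: \eqref{eq: y r repulsion} is asserted only for states \emph{in} $\mathcal{B}$, so before invoking it along the trajectory I note that $s(t_0)\in\mathcal{B}$, the lower bounds $s_{1:r}\ge\underline{b}$ and the inclusion $s_{r+1:n}\in\mathcal{P}$ from \eqref{eq: stays in B}, and continuity of $s$ together keep $s(t)$ in the polytope $\mathcal{B}$ on at least a maximal subinterval of $[t_0,t_1)$ issuing from $t_0$; the estimates derived below are continuous in $t$, hence extend to the closure of that subinterval, and this is precisely the bootstrap continued in Lemma~\ref{lemma: B open}. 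On this subinterval, \eqref{eq: y^(r)} together with \eqref{eq: y r repulsion} give $\dot z(t)\le-\beta z(t)$ with $z:=y^{(r-1)}$.

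For the base case $k=r-1$, multiplying by the integrating factor $e^{\beta(t-t_0)}$ shows that $t\mapsto e^{\beta(t-t_0)}y^{(r-1)}(t)$ has nonpositive derivative, hence is nonincreasing, so $y^{(r-1)}(t)\le y^{(r-1)}(t_0)e^{-\beta(t-t_0)}$, which is \eqref{eq: y^k solution} at index $r-1$ (this already settles $r=2$). For the induction step, assume \eqref{eq: y^k solution} at index $k+1$ with $1\le k\le r-2$; integrating it from $t_0$ to $t$ and using $\int_{t_0}^{t}e^{-\beta(\sigma-t_0)}\,d\sigma=\beta^{-1}(1-e^{-\beta(t-t_0)})$ yields $y^{(k)}(t)\le y^{(k)}(t_0)+\beta^{-1}y^{(k+1)}(t_0)(1-e^{-\beta(t-t_0)})$. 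The $(k{+}2)$-th entry of upper bound \eqref{eq: buffer upper bound}, read off from $s(t_0)\in\mathcal{B}$, states $y^{(k+1)}(t_0)=s_{k+2}(t_0)\le-\beta\,s_{k+1}(t_0)=-\beta\,y^{(k)}(t_0)$; since $1-e^{-\beta(t-t_0)}\ge0$, substituting collapses the right-hand side to exactly $y^{(k)}(t_0)e^{-\beta(t-t_0)}$, giving \eqref{eq: y^k solution} at index $k$. Iterating down to $k=1$ establishes \eqref{eq: y^k solution} in full.

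Finally, for the output itself, apply \eqref{eq: y^k solution} at $k=1$, namely $\dot y(t)\le\dot y(t_0)e^{-\beta(t-t_0)}$, and integrate once more to get $y(t)\le y(t_0)+\beta^{-1}\dot y(t_0)(1-e^{-\beta(t-t_0)})$. The second entry of \eqref{eq: buffer upper bound} at $t_0$ reads $\dot y(t_0)=s_2(t_0)\le\beta(y_{max}-s_1(t_0))=\beta(y_{max}-y(t_0))$; substituting and simplifying gives $y(t)\le y_{max}-(y_{max}-y(t_0))e^{-\beta(t-t_0)}$, which is exactly \eqref{eq: y solution}.

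The calculus here is routine; the two points that need care are (i) justifying that \eqref{eq: y r repulsion} may legitimately be used along $s$ — the lower bounds and $\mathcal{P}$-membership in \eqref{eq: stays in B} enter the argument \emph{only} to keep the trajectory inside $\mathcal{B}$ so that \eqref{eq: y r repulsion} applies — and (ii) seeing why the downward integrations telescope into clean exponentials. The latter works precisely because the upper bound \eqref{eq: buffer upper bound} was designed so that $\overline{b}_{k+2}(s)=-\beta s_{k+1}$ is the time derivative of the bound one level up (using $\dot s_k=s_{k+1}$), so that each boundary value supplied by $s(t_0)\in\mathcal{B}$ is exactly the quantity needed to absorb the $(1-e^{-\beta(t-t_0)})$ term. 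I expect (i), the interplay with Lemma~\ref{lemma: B open} ensuring $s(t)\in\mathcal{B}$, to be the only genuine subtlety.
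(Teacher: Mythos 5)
Your proof is correct and follows essentially the same route as the paper's: establish exponential decay of $y^{(r-1)}$ from \eqref{eq: y r repulsion} via a comparison/integrating-factor argument, then integrate downward level by level, using at each level the component $s_{k+2}(t_0)\le-\beta s_{k+1}(t_0)$ of $s(t_0)\in\mathcal{B}$ to collapse the result into \eqref{eq: y^k solution} and finally \eqref{eq: y solution}. The only cosmetic difference is that the paper propagates the pointwise inequality $y^{(k+1)}(t)\le-\beta y^{(k)}(t)$ for all $t$ via a monotone auxiliary function $g$ before invoking the comparison lemma, whereas you integrate the exponential bound directly and use the initial datum once; your explicit remark that \eqref{eq: stays in B} serves only to keep $s(t)$ in $\mathcal{B}$ so that \eqref{eq: y r repulsion} applies is a point the paper leaves implicit.
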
   
\begin{proof}   
    We apply the comparison lemma of \cite{Khalil} to differential inequality \eqref{eq: y r repulsion}, which yields $y^{(r-1)}(t) \leq y^{(r-1)}(t_0)e^{-\beta (t-t_0)}$.
    
    Initial condition $s(t_0) \in \mathcal{B}$ yields 
    $$s_r(t_0) = y^{(r-1)}(t_0) \leq -\beta s_{r-1}(t_0) = -\beta y^{(r-2)}(t_0).$$
    Define function $g(t) := y^{(r-1)}(t) + \beta y^{(r-2)}(t)$. Then, \eqref{eq: y r repulsion} is equivalent to $\dot g(t) \leq 0$ and our initial condition is $g(t_0) \leq 0$. Therefore, $g(t) \leq 0$ for all $t \in [t_0, t_1)$, i.e., $y^{(r-1)}(t) \leq -\beta y^{(r-2)}(t)$. Using the comparison lemma of \cite{Khalil}, we can solve this differential inequality and obtain $y^{(r-2)}(t) \leq y^{(r-2)}(t_0)e^{-\beta (t-t_0)}$.

    We can iterate this process for $k \in \{r-3, \hdots, 1\}$ and obtain $y^{(k+1)}(t) \leq -\beta y^{(k)}(t)$, which yields $y^{(k)}(t) \leq y^{(k)}(t_0) e^{-\beta (t-t_0)}$ for all $t \in [t_0, t_1)$.
    
    For $k = 1$, we thus have $\ddot y(t) + \beta \dot y(t) \leq 0$. Initial condition $s(t_0) \in \mathcal{B}$ yields $\dot y(t_0) \leq \dot \beta \big(y_{max} - y(t_0)\big)$, or equivalently $\dot y(t_0) + \beta y(t_0) \leq \beta y_{max}$. As previously, with $g(t) := \dot y(t) + \beta y(t)$, we have $\dot g(t) \leq 0$ and $g(t_0) \leq \beta y_{max}$. Thus, $g(t) \leq \beta y_{max}$, i.e., $\dot y(t) + \beta y(t) \leq \beta y_{max}$ for all $t \in [t_0, t_1)$. We solve this differential inequality using the comparison lemma of \cite{Khalil} and obtain \eqref{eq: y solution} for all $t \in [t_0, t_1)$.
\end{proof}

Lemma~\ref{lemma: B closed} used \eqref{eq: y r repulsion} and upper bound $\overline{b}$ of \eqref{eq: buffer upper bound} to obtain the differential equations verified by the output derivatives in $\mathcal{B}$. We will now use initial condition \eqref{eq: strictly in B} to show that trajectories cannot leave $\mathcal{B}$ through its upper bound $\overline{b}$.

\begin{lemma}\label{lemma: B open}
    If \eqref{eq: y r repulsion} holds for all $s \in \mathcal{B}$, \eqref{eq: y solution} and \eqref{eq: y^k solution} hold for all $t \in [t_0, t_1)$, then \eqref{eq: strictly in B} implies $s_{1:r}(t) \leq \overline{b}(s(t))$ for all $t \in [t_0, t_1)$.
\end{lemma}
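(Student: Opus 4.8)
The plan is to decompose the vector inequality $s_{1:r}(t)\le\overline{b}\big(s(t)\big)$ into its $r$ scalar components using the explicit form of $\overline{b}$ in \eqref{eq: buffer upper bound}, and to check each component by feeding the closed-form decay bounds of Lemma~\ref{lemma: B closed} (namely \eqref{eq: y solution} and \eqref{eq: y^k solution}) into the structure of $\overline{b}$, with the strict entry condition \eqref{eq: strictly in B} supplying the initial data. Concretely, recalling that $s_k=y^{(k-1)}$, the claim $s_{1:r}(t)\le\overline{b}\big(s(t)\big)$ is exactly the conjunction of $y(t)\le y_{max}$, of $\dot y(t)\le\beta\big(y_{max}-y(t)\big)$, and of $y^{(k-1)}(t)\le-\beta\,y^{(k-2)}(t)$ for every $k\in[\![3,r]\!]$. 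I would verify these three families in turn; since \eqref{eq: strictly in B} supplies strict inequalities at $t_0$ and those are what gets propagated, the argument in fact delivers the strict conclusion $s_{1:r}(t)<\overline{b}\big(s(t)\big)$ that Theorem~\ref{thm: admissible trajectories} ultimately uses.

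The first component is immediate: \eqref{eq: strictly in B} gives $y(t_0)<y_{max}$, so the factor $y(t_0)-y_{max}$ in \eqref{eq: y solution} is negative, and since $e^{-\beta(t-t_0)}>0$ we get $y(t)\le\big(y(t_0)-y_{max}\big)e^{-\beta(t-t_0)}+y_{max}<y_{max}$ on $[t_0,t_1)$. The second component is the one that requires a little care about the direction of inequalities. I would chain three facts: $\dot y(t)\le\dot y(t_0)e^{-\beta(t-t_0)}$ from \eqref{eq: y^k solution} with $k=1$; $\dot y(t_0)\le\beta\big(y_{max}-y(t_0)\big)$ from \eqref{eq: strictly in B}; and, rearranging \eqref{eq: y solution}, $y_{max}-y(t)\ge\big(y_{max}-y(t_0)\big)e^{-\beta(t-t_0)}$ --- here the \emph{upper} bound on $y(t)$ becomes the \emph{lower} bound on $y_{max}-y(t)$ that the chain needs. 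Multiplying this last inequality by $\beta>0$ and concatenating the three yields $\dot y(t)\le\beta\big(y_{max}-y(t)\big)$.

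For the remaining components, fix $k\in[\![3,r]\!]$ (this family is empty when $r=2$). From \eqref{eq: y^k solution} at derivative order $k-1$, $y^{(k-1)}(t)\le y^{(k-1)}(t_0)e^{-\beta(t-t_0)}$; from \eqref{eq: strictly in B}, $y^{(k-1)}(t_0)\le-\beta\,y^{(k-2)}(t_0)$; and from \eqref{eq: y^k solution} at order $k-2$ (legitimate since $1\le k-2\le r-1$), $y^{(k-2)}(t)\le y^{(k-2)}(t_0)e^{-\beta(t-t_0)}$, hence $-\beta\,y^{(k-2)}(t)\ge-\beta\,y^{(k-2)}(t_0)e^{-\beta(t-t_0)}$. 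Concatenating these three inequalities gives $y^{(k-1)}(t)\le-\beta\,y^{(k-2)}(t)$, which is the $k$-th component of $s_{1:r}(t)\le\overline{b}\big(s(t)\big)$. Collecting all $r$ components completes the proof.

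The argument is essentially bookkeeping, so the only thing I expect to be mildly delicate is index management: the $k$-th entry of $\overline{b}$ is built from $s_{k-1}$, i.e.\ from $y^{(k-2)}$, so the decay estimate \eqref{eq: y^k solution} must be invoked at two consecutive derivative orders, and one must check that both lie in the admissible range $[\![1,r-1]\!]$ on which Lemma~\ref{lemma: B closed} supplies \eqref{eq: y^k solution} --- which they do, precisely because $\overline{b}$ never references derivatives of order larger than $r-1$. Note that hypothesis \eqref{eq: y r repulsion} is not used again in this step; it has already done its work inside Lemma~\ref{lemma: B closed} to generate \eqref{eq: y solution} and \eqref{eq: y^k solution}.
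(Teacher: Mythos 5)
Your proposal is correct and follows essentially the same route as the paper's proof: component-wise verification of $s_{1:r}(t)\le\overline{b}\big(s(t)\big)$ by chaining the exponential decay bounds \eqref{eq: y solution} and \eqref{eq: y^k solution} at two consecutive derivative orders with the strict initial data from \eqref{eq: strictly in B}. Your side remarks are also accurate — the argument does yield the strict inequality used in Theorem~\ref{thm: admissible trajectories}, and hypothesis \eqref{eq: y r repulsion} plays no further role once \eqref{eq: y solution} and \eqref{eq: y^k solution} are available.
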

\begin{proof}
    Condition $s(t_0) < \overline{b}(s(t_0))$ yields $y(t_0) < y_{max}$. This initial condition combined with \eqref{eq: y solution} leads to $y(t) < y_{max}$ for all $t \in [t_0, t_1)$, i.e., $s_1(t) < \overline{b}(s(t))_1$.
    
    \noindent Initial condition $s_2(t_0) < \overline{b}(s(t_0))_2$ yields $\dot y(t_0) < \beta(y_{max} - y(t_0))$. Starting from \eqref{eq: y^k solution} for $k = 1$, we have
    \begin{equation}\label{eq: y dot}
        \dot y(t) \leq \dot y(t_0)e^{-\beta (t-t_0)} < \beta(y_{max} - y(t_0)) e^{-\beta (t-t_0)}
    \end{equation}
    for all $t \in [t_0, t_1)$. Reorganizing \eqref{eq: y solution} leads to $-y(t_0)e^{-\beta (t-t_0)} \leq -y(t) + y_{max}(1 - e^{-\beta (t-t_0)})$, which can be combined with \eqref{eq: y dot} into
    \begin{align*}
        \dot y(t) &< \beta \big( y_{max}e^{-\beta (t-t_0)}-y(t) + y_{max}(1 - e^{-\beta (t-t_0)}) \big) \\
        &< \beta \big( y_{max} - y(t)\big) e^{-\beta (t-t_0)},
    \end{align*}
    i.e., $s_2(t) < \overline{b}(s(t))_2$ for all $t \in [t_0, t_1)$.

    Similarly for $k \in [\![2, r-1]\!]$, \eqref{eq: y^k solution} combined with initial condition $y^{(k)}(t_0) < -\beta y^{(k-1)}(t_0)$ leads to
     \begin{equation}\label{eq: y^k}
        y^{(k)}\!(t) \leq y^{(k)}\!(t_0)e^{-\beta (t-t_0)} < -\beta y^{(k-1)}\!(t_0) e^{-\beta (t-t_0)}
    \end{equation}
    for all $t \in [t_0, t_1)$. Reversing \eqref{eq: y^k solution} at $k-1$ leads to $-y^{(k-1)}(t_0)e^{-\beta (t-t_0)} \leq -y^{(k-1)}(t)$, which can be combined with \eqref{eq: y^k} into $y^{(k)}(t) < -\beta y^{(k-1)}(t)$, i.e., $s_{k+1}(t) < \overline{b}(s(t))_{k+1}$ for all $t \in [t_0, t_1)$.
\end{proof}

\begin{remark}
    Control set $\mathcal{U}$ of \eqref{eq: nonlinear dynamics} might prevent the existence of an admissible stabilizing policy $\mu_\theta$. That is why we use RL to find policy $\mu_\theta$ and we verify its safety with Theorem~\ref{thm: admissible trajectories}.
\end{remark}

Now that we have established our central result, we will illustrate its implementation on two numerical simulations.

\section{Numerical Simulations}\label{sec: simulations}

\subsection{Gym Inverted Pendulum}

We consider the Inverted Pendulum Gym environment~\cite{Gym} with the MuJoCo dynamics engine~\cite{mujoco} as illustrated in Fig.~\ref{fig: inverted pendulum}. The environment state $x$ is composed, in that order, of the cart position $p$, the pole angle $\theta$, and their derivatives $\dot p$ and $\dot \theta$.
The objective within this environment is to maintain the pole close to the vertical, i.e., $|\theta| \leq 0.2$ rad. Let us focus on enforcing the upper constraint, $y := \theta \leq 0.2$ rad. This constraint has a relative degree $r = 2$ since the control input is the force exerted on the cart, which directly impacts $\ddot \theta$.

\begin{figure}[t!]
    \centering
    \includegraphics[scale=0.25]{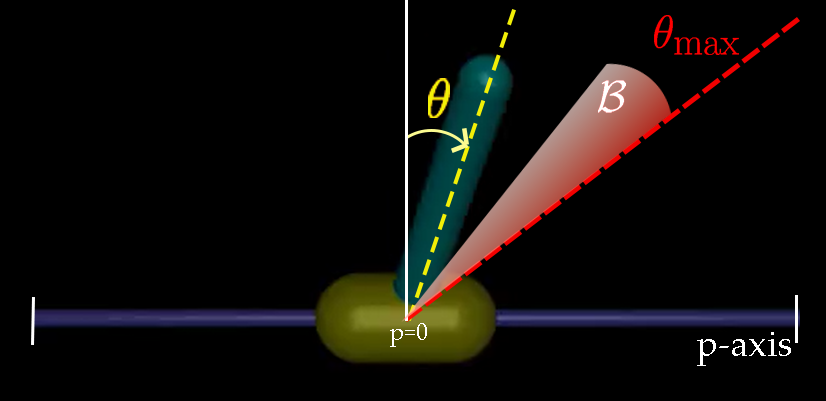}
    \caption{The inverted pendulum Gym environment \cite{Gym} annotated with cart position $p$, pendulum angle $\theta$, and buffer $\mathcal{B}$.}
    \label{fig: inverted pendulum}
\end{figure}

Following our choice of $y = \theta$, we define state $s = T(x)$ 
\begin{equation*}
    s := \begin{bmatrix} y \\ \dot y \\ s_3 \\ s_4 \end{bmatrix} = \begin{bmatrix} \theta \\ \dot \theta \\ p \\ \dot p \end{bmatrix} = \begin{bmatrix}
        0 & 1 & 0 & 0 \\
        0 & 0 & 0 & 1 \\
        1 & 0 & 0 & 0 \\
        0 & 0 & 1 & 0 \end{bmatrix} \begin{bmatrix} p \\ \theta \\ \dot p \\ \dot \theta \end{bmatrix} = Tx.
\end{equation*}
Transformation $T$ is linear and invertible, hence verifying Assumption~\ref{assum: T}. Additionally, we obtain $T$ without violating the black-box assumption on dynamics $f$, since $T$ only reorders the states.

Following Section~\ref{subsec: buffer}, we will now design buffer $\mathcal{B}$, whose architecture should help dissipate the inertia of trajectories arriving at $\theta = y_{min}$ with velocities $\dot \theta \leq \dot y_{max}$. We choose $y_{max} := 0.2$ rad, $y_{min} = 0.1$ rad, $s_2^{min} = 0$ rad/s and $\dot y_{max} = 1$ rad/s, and define buffer 
\begin{align*}
    \mathcal{B} := \big\{ s \in \mathcal{S} : s_1 &= y = \theta \in [0.1,\, 0.2],\\
    s_2 &= \dot y = \dot \theta \in [0,\, 2-10\theta],\\
    s_3 &= p \in [-0.9,\, 0.9],\\
    s_4 &= \dot p \in [-1,\, 1] \big\}.
\end{align*}
This choice of $\mathcal{B}$ allows only $\dot\theta = 0$ rad/s when $\theta = 0.2$ rad, hence preventing $\theta$ in $\mathcal{B}$ from growing past $0.2$ rad. Here polytope $\mathcal{P}$ of \eqref{eq: buffer} is $\mathcal{P} := [-0.9,\, 0.9] \times [-1,\, 1]$.

Following Definition~\ref{def: epsilon}, we sample states in $\mathcal{B}$ and perform a linear regression on $\ddot \theta$ to obtain an approximation measure $\varepsilon = 0.53$. We model controller $\mu_\theta$ with a deep neural network trained to stabilize the pole at $\theta = 0$ in a reinforcement learning fashion using proximal policy optimization (PPO) \cite{PPO}. We train two such policies, one being a standard multi-layer perceptron (MLP) to form a baseline, and the other having POLICEd layers \cite{police} enforcing affine condition~\eqref{eq: police}. Both policies follow the same training and are encouraged to enforce dissipation condition~\eqref{eq: dissipation} at the vertices of $\mathcal{B}$, which translates to $\ddot \theta \leq -2\varepsilon -10 \dot \theta$.

To illustrate Theorem~\ref{thm: admissible trajectories}, assume that our POLICEd controller $\mu_\theta$ enforces dissipation condition~\eqref{eq: dissipation} and let us consider a trajectory $s$ entering buffer $\mathcal{B}$ at time $t_0$. If initial state condition \eqref{eq: strictly in B} holds, i.e., if $\theta(t_0) < 0.2$ rad and $\dot \theta(t_0) < 2-10\theta(t_0)$, then as long as $\theta(t) \geq 0.1$ rad, $\dot \theta(t) \geq 0$ rad/s, and $\big( p(t), \dot p(t) \big) \in \mathcal{P}$, we have $\theta(t) < 0.2$ rad and $\dot \theta(t) < 2-10\theta(t)$. These equations generate the phase portrait of Fig.~\ref{fig: pendulum phase portrait}, which successfully reproduces the desired behavior exhibited in Fig.~\ref{fig: schema}. 

Our POLICEd controller guarantees that all trajectories entering $\mathcal{B}$ cannot cross its upper bound $\overline{b}$ and hence cannot violate the constraint, whereas some baseline trajectories cross $\overline{b}$ and violate the constraint as shown in Fig.~\ref{fig: pendulum phase portrait}.

\begin{figure}[t!]
    \centering
    \includegraphics[scale=0.5]{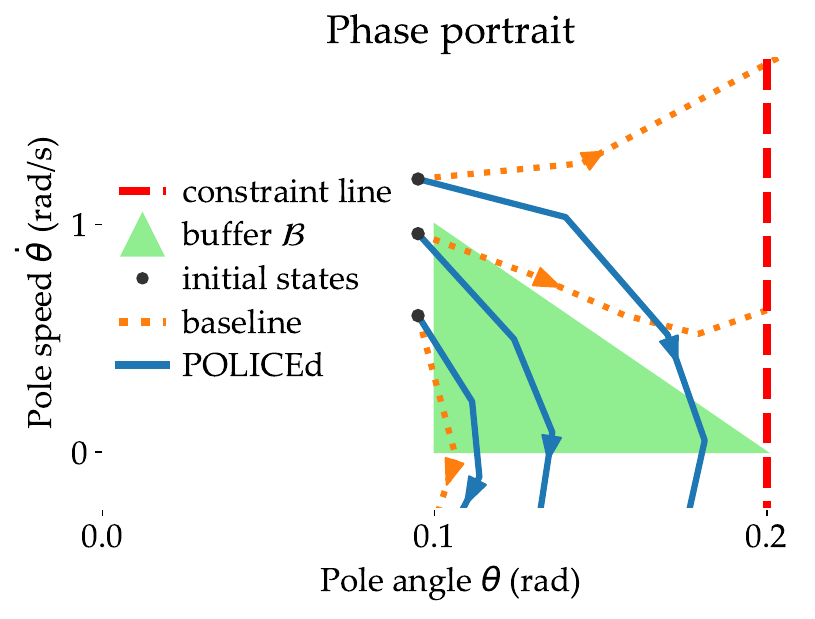}
    \caption{Phase portrait of $\big(\theta, \dot \theta \big)$ for the inverted pendulum. None of the POLICEd trajectories (\textcolor{blue!80}{blue}) entering buffer $\mathcal{B}$ (\textcolor{green!80!black}{green}) cross constraint line $\theta = 0.2$ rad (\textcolor{red}{dashed red}), whereas some of the baseline trajectories do (\textcolor{orange}{dotted orange}). Our approach guarantees that a pole arriving at $\theta = 0.1$ rad with a velocity $\dot \theta < 1$ rad/s will satisfy $\theta \leq 0.2$ rad. We do not guarantee the safety of POLICEd trajectories not entering the buffer.}
    \label{fig: pendulum phase portrait}
\end{figure}

\subsection{Space Shuttle Landing}\label{subsec: SSL}

We now study the highly-nonlinear dynamics of the space shuttle landing \cite{SpaceShuttle}. Original state $x \in \mathbb{R}^3$ is composed of the altitude $h$ of the shuttle, its flight path angle $\gamma$, and its velocity $v$, as seen in Fig.~\ref{fig: shuttle}. The dynamics of these state are
\begin{subequations}\label{eq: shuttle dynamics}
\begin{align}
    \dot h(t) &= v(t) \sin \gamma(t) \label{eq: h} \\
    \dot \gamma(t) &= \rho(t) v(t) C_L(t)\frac{S}{2m} - \frac{g \cos \gamma(t)}{v(t)} \label{eq: gamma} \\
    \dot v(t) &= -\rho(t) v^2(t) C_D(t)\frac{S}{2m} - g \sin \gamma(t), \label{eq: v}
\end{align}
\end{subequations}
where the air density satisfies $\rho(t) = \rho_0 e^{-h(t)/H}$ and the lift and drag coefficients take the form
\begin{equation}\label{eq: aero coefs}
\def\arraystretch{1.2}\begin{array}{l}
     C_L(t) = C_{L_0} \sin^2\! \alpha(t) \cos \alpha(t) \\
     C_D(t) = C_{D_0} + K C_L^2(t).
\end{array}
\end{equation}
The other parameters are detailed in Table~\ref{tab: shuttle values}. The control input is the angle-of-attack $\alpha$ of the shuttle, which makes these dynamics non-affine in control, and hence cannot be handled directly by any CBF method\footnote{Adding an integrator $\dot \alpha(t) = u(t)$ renders dynamics~\eqref{eq: shuttle dynamics} affine in control at the price of a higher relative degree \cite{Shuttle_thesis}.} \cite{ZCBF, Exponential_CBF, HOCBF}. 

\begin{figure}[t!]
    \centering
    \includegraphics[scale=0.12]{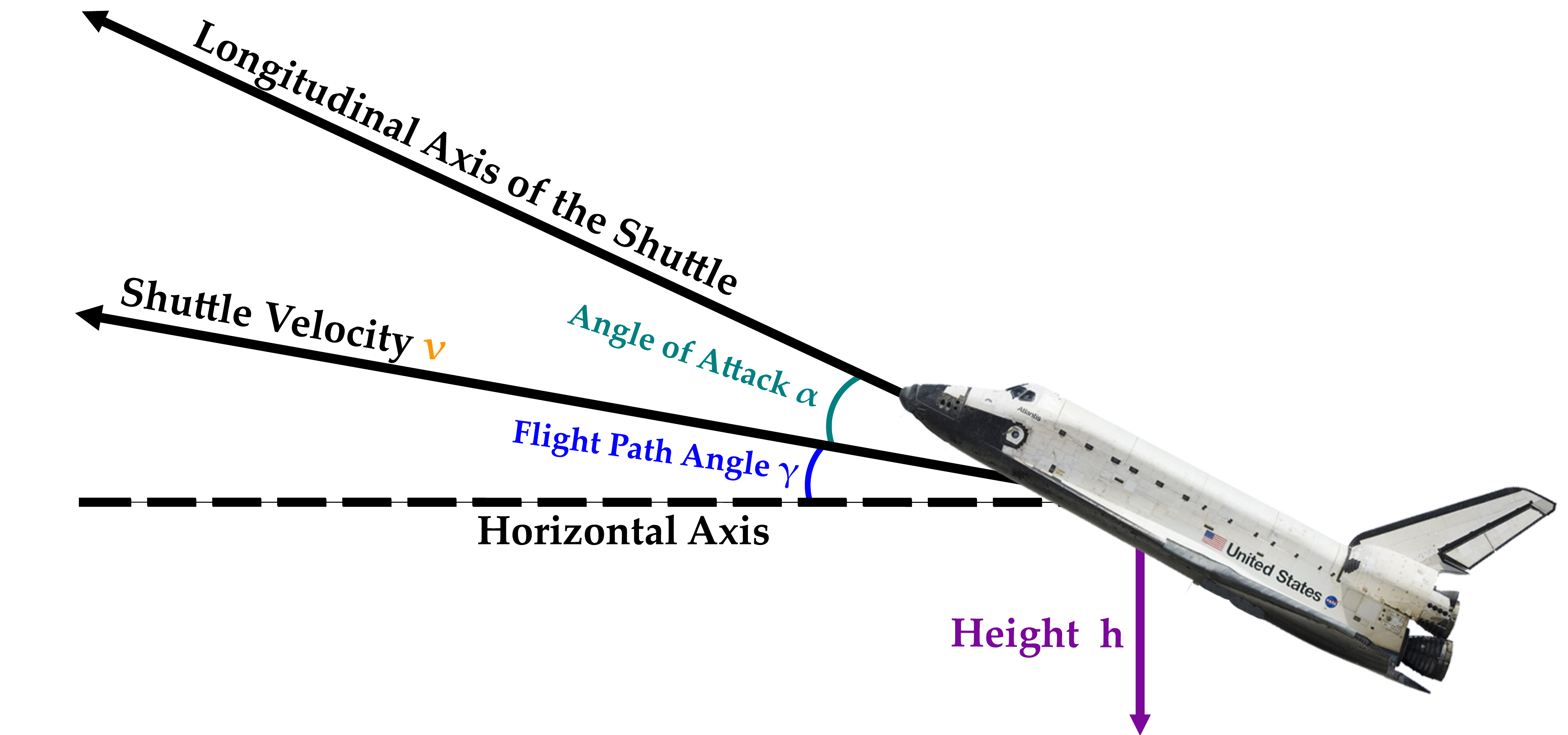}
    \caption{Illustration of our Space Shuttle environment. The state $x \in \mathbb{R}^3$ is composed of the altitude or height $h$ of the shuttle, its flight path angle $\gamma$, and its velocity $v$. The control action is the angle of attack $\alpha$.}
    \label{fig: shuttle}
\end{figure}

In this scenario, the shuttle starts from a descent configuration with high vertical velocity $\dot h$, which must be drastically reduced to allow for a soft landing.
More specifically, we consider initial states typical of a descent phase from a height $h_0 = 500$ ft, velocities $v_0 \in [300, 400]$ ft/s and flight path angles $\gamma_0 \in [-30^\circ, -10^\circ]$.
The objective of our controller is to bring the shuttle to a low altitude $h \leq 50$ ft with vertical velocity $\dot h \leq 6$ ft/s sufficiently small to allow for a soft landing \cite{SpaceShuttle}.
We choose an output constraint $y := -h \leq 0$, which has a relative degree $2$ for control input $\alpha$.
We build a buffer with $y_{min} = -50$ ft, $y_{max} = 0$ ft, $s_2^{min} = 6$ ft/s and $\dot y_{max} = 100$ ft/s.

We introduce state $s := \big(h, \dot h, \gamma \big)$ and invertible transformation $T(x) = s$ is easily obtainable from pure geometric considerations in Fig.~\ref{fig: shuttle} as it only needs \eqref{eq: h}. Thus, Assumption~\ref{assum: T} is verified and determining $T$ does not violate our black-box assumption on dynamics \eqref{eq: shuttle dynamics}, \eqref{eq: aero coefs}.

We train two PPO policies \cite{PPO} to minimize the vertical velocity at touchdown. One of these policies is a standard MLP used as a baseline and the other is our POLICEd version \cite{POLICEd_RL}. More implementation details are included in Appendix~\ref{subsec: SSID}.
As seen in Fig.~\ref{fig: shuttle phase} our POLICEd policy successfully enforces the dissipative buffer of Theorem~\ref{thm: admissible trajectories} and ensures soft landing of the shuttle contrary to the baseline PPO policy.

\begin{figure}[t!]
    \centering
    \includegraphics[scale=0.5]{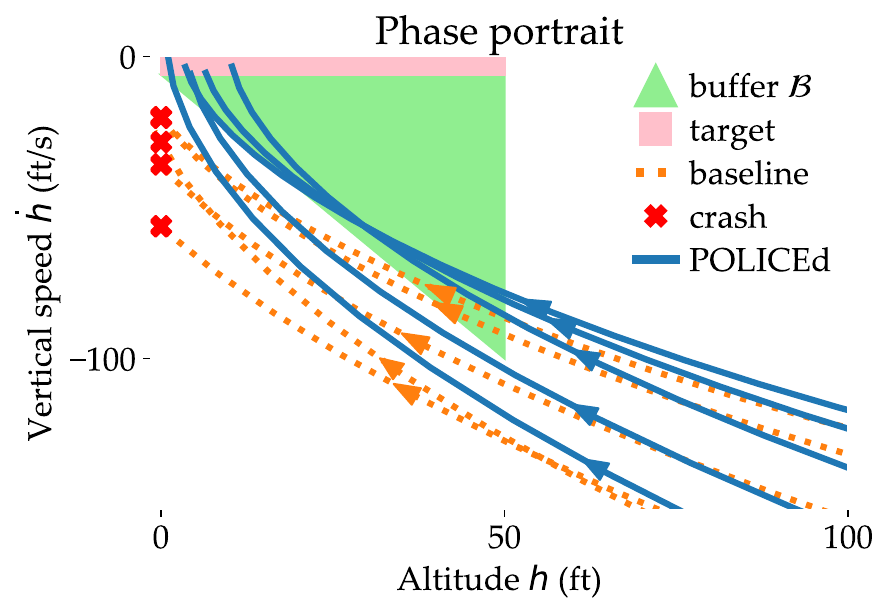}
    \caption{Phase portrait of the space shuttle landing. POLICEd trajectories (\textcolor{blue}{blue}) entering buffer $\mathcal{B}$ (\textcolor{green!80!black}{green}) all converge to a set of target conditions (\textcolor{pink!90!black}{pink}) with small vertical velocity from which landing is feasible. However, the baseline trajectories (\textcolor{orange}{dotted orange}) reach the ground $h = 0$ with high vertical velocities $\dot h \leq -6$ ft/s resulting in a crash of the shuttle (\textcolor{red}{\textbf{x}}).}
    \label{fig: shuttle phase}
\end{figure}

\section{Conclusion and Future Work}\label{sec: conclusion}

In this work, we established High Relative Degree POLICEd RL, a novel method to enforce a hard constraint of high relative degree on learned policies, while only using an implicit black-box model of the environment. We built a buffer region where the policy dissipates the generalized inertia of the high relative degree constraint to prevent trajectories from reaching the constraint line.
We illustrated our theory on the MuJoCo inverted pendulum and on a space shuttle landing scenario.

Several avenues for future work seem especially interesting. Extending the POLICE algorithm of \cite{police} to enforce multiple affine regions would allow a straightforward extension of this work to guarantee the satisfaction of multiple constraints of high relative degree.
Another interesting pursuit would be to investigate how to guarantee constraint satisfaction during the training process of the policy.

\appendix


\subsection{Space Shuttle Implementation Details}\label{subsec: SSID}

The baseline and POLICEd policies are both modeled by deep neural networks composed of 3 layers of 128 hidden units. Their reward function penalizes changes in the control input to encourage smooth variations of the angle of attack, and penalizes the final altitude and vertical velocity of the shuttle to promote soft landings. Its expression is
\begin{equation*}
    R(t) = -0.2 \big| a(t) - a(t-dt)\big| - \mathds{1}_{t = t_f} \big( |h(t_f)| + |\dot h(t_f)| \big),
\end{equation*}
where $\mathds{1}_{t = t_f}$ is the final time $t_f$ indicator function.

\begin{remark}\label{rmk: alpha input}
    Choosing $\alpha$ as input might seem unrealistic since $\alpha$ must be continuous. However, our control signal $u(t)$ is continuous by construction as a continuous function of the state $\mu_\theta(x(t))$.
    Adding an integrator $\dot \alpha(t) = u(t)$ as in \cite{Shuttle_affine} could make a reasonable input choice but at the price of increased complexity in calculating $T^{-1}$ to recover $\alpha$ from $s$.
\end{remark}

\begin{table}[t!]
    \centering
    \caption{Numerical values for the shuttle simulation from \cite{SpaceShuttle}.}
    \label{tab: shuttle values}
    \def\arraystretch{1.3}
    \begin{tabular}{ccc}
        \hline Parameter & Name & Value \\ \hline \hline
        $S/m$ & surface area over mass & $0.9118$ ft$^2$/slug \\
        $C_{L_0}$ & zero-angle-of-attack lift coefficient & $2.3$ \\
        $C_{D_0}$ & zero-lift drag coefficient & $0.0975$ \\
        $K$ & lift-induced drag coef. parameter & $0.1819$ \\
        $\rho_0$ & sea-level air density & $0.0027$ slugs/ft$^3$ \\
        $g$ & Earth’s gravitational acceleration & $32,174$ ft/s$^2$ \\
        $H$ & scale height & $27890$ ft \\ \hline
    \end{tabular}
\end{table}


\begin{figure*}[t!]
    \centering
    \begin{subfigure}[t]{0.32\textwidth}
        \includegraphics[scale=0.42]{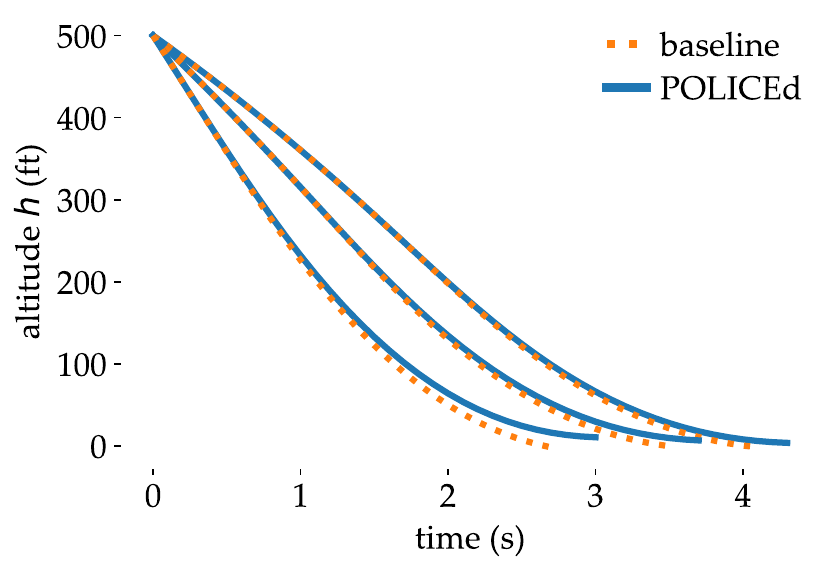}
    \end{subfigure}\hfill
    \begin{subfigure}[t]{0.32\textwidth}
        \includegraphics[scale=0.42]{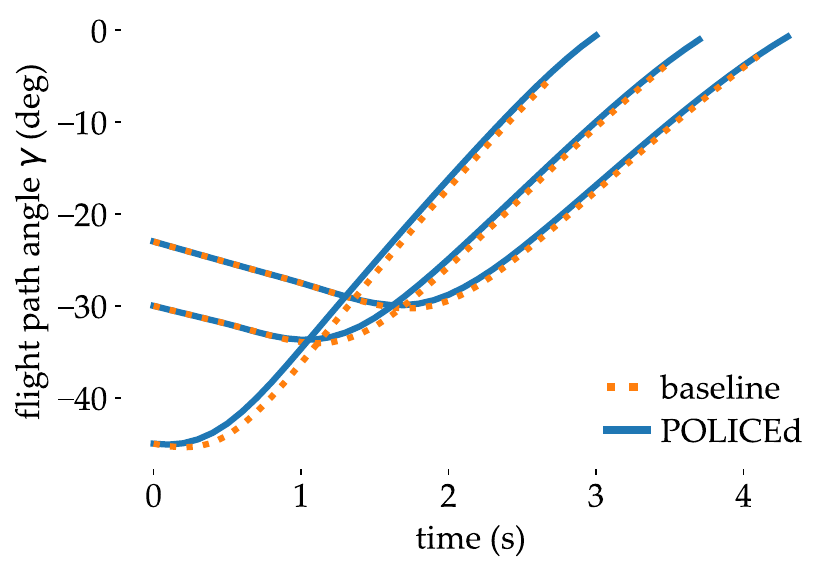}
    \end{subfigure}\hfill
    \begin{subfigure}[t]{0.32\textwidth}
        \includegraphics[scale=0.42]{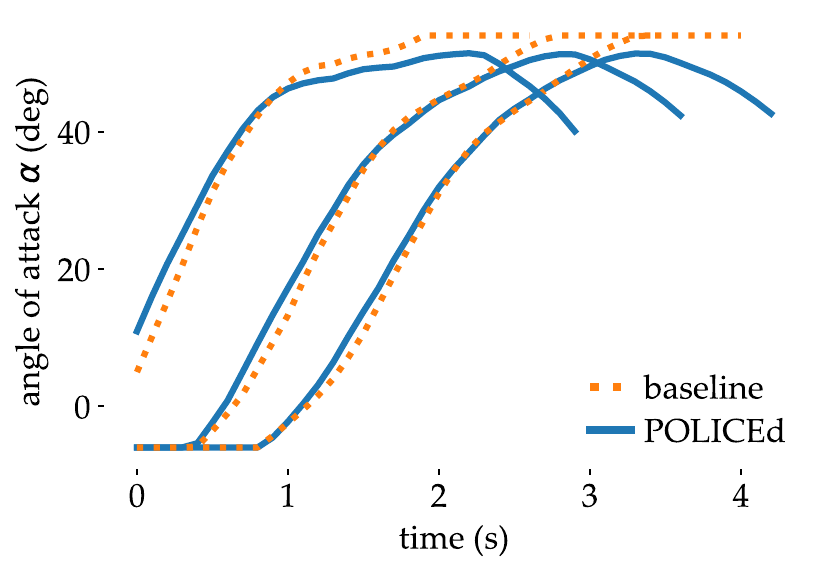}
    \end{subfigure}\hfill
    \caption{Evolution of the altitude $h$, flight path angle $\gamma$ and angle of attack $\alpha$ for three trajectories of the shuttle. All POLICEd trajectories reach level flight $\gamma = 0^\circ$ at touchdown, which also appears as the flattening of the altitude plot at $h = 0$ ft, on the contrary to the baseline. The POLICEd controller also learns to reduce the angle of attack $\alpha$ for landing, while the baseline remain at saturation.}
    \label{fig: shuttle time plots}
\end{figure*}

\subsection{Supporting Lemmata}\label{subsec: lemmata}

Recall that Theorem~\ref{thm: admissible trajectories} guarantees the respect of upper bound $\overline{b}$ only while the trajectory remains in $\mathcal{B}$. We can thus strengthen Theorem~\ref{thm: admissible trajectories} by deriving a lower bound $\underline{b}$ sufficiently low for upper bound $\overline{b}(s)$ never to cross $\underline{b}$ and cause a trajectory to prematurely exit buffer $\mathcal{B}$.
To put it simply, we want $\underline{b} \leq \overline{b}(s)$ for all $s \in \mathcal{B}$.

\begin{lemma}\label{lemma: lower bounds}
    Condition $\underline{b} \leq \overline{b}(s)$ for all $s \in \mathcal{B}$ is equivalent to $s_2^{min} \leq 0$, $s_{2k+1}^{min} \leq -\beta^{2k-1} \dot y_{max}$ and $s_{2k+2}^{min} \leq \beta^{2k} s_{2}^{min}$ for all $k \in [\![1, r/2]\!]$.
\end{lemma}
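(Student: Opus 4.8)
The plan is to unfold the single vector inequality $\underline{b}\le\overline{b}(s)$ into its $r$ scalar components and treat them one coordinate at a time. Since $\underline{b}$ in \eqref{eq: buffer lower bound} is a constant vector while the $j$-th entry of $\overline{b}(s)$ in \eqref{eq: buffer upper bound} depends \emph{only} on $s_{j-1}$ (or on $s_1$ when $j=2$), the requirement ``$\underline{b}\le\overline{b}(s)$ for all $s\in\mathcal{B}$'' is equivalent to $s_j^{min}\le\min_{s\in\mathcal{B}}\overline{b}(s)_j$ for every $j\in[\![1,r]\!]$. The first coordinate reads $y_{min}\le y_{max}$, which holds by construction, so only the coordinates $j\ge 2$ produce constraints, and I expect them to reproduce exactly the three families listed in Lemma~\ref{lemma: lower bounds}.

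The core computation is to determine $M_j:=\max_{s\in\mathcal{B}}s_j$ and substitute. For $j=1$, $M_1=y_{max}$; for $j=2$, the bound $s_2\le\beta(y_{max}-s_1)$ of \eqref{eq: buffer upper bound} is largest at $s_1=y_{min}$, giving $M_2=\beta(y_{max}-y_{min})=\dot y_{max}$ by the definition of $\beta$ in \eqref{eq: beta}; for $j\ge 3$, $s_j\le-\beta s_{j-1}$ is largest when $s_{j-1}$ is smallest, i.e. $s_{j-1}=s_{j-1}^{min}$, so $M_j=-\beta s_{j-1}^{min}$. Plugging into $s_j^{min}\le\min_{s\in\mathcal{B}}\overline{b}(s)_j=-\beta M_{j-1}$ (and into $\beta(y_{max}-M_1)=0$ for $j=2$) yields the recursive conditions $s_2^{min}\le 0$, $s_3^{min}\le-\beta\dot y_{max}$, and $s_j^{min}\le\beta^{2}s_{j-2}^{min}$ for $4\le j\le r$. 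Iterating this last relation separately along the even chain $2\to 4\to 6\to\cdots$ and the odd chain $3\to 5\to 7\to\cdots$ (using the base case $s_3^{min}\le-\beta\dot y_{max}$) gives the closed forms $s_{2k+2}^{min}\le\beta^{2k}s_2^{min}$ and $s_{2k+1}^{min}\le-\beta^{2k-1}\dot y_{max}$; the alternating sign is simply the single factor $-\beta<0$ absorbed at each index step, which flips parity every other coordinate.

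The main obstacle is the \emph{self-referential} nature of the claim: the buffer $\mathcal{B}$ in \eqref{eq: buffer} is itself defined through $\underline{b}$, so each $M_{j-1}$ a priori depends on the very lower bounds $s_i^{min}$ we are constraining. I would resolve this by induction on the coordinate index $j$, exploiting that $M_{j-1}$ depends only on $s_2^{min},\dots,s_{j-2}^{min}$, which have already been characterized by the time coordinate $j$ is reached, so the $j$-th condition is well posed. A secondary nuisance is verifying that each maximum $M_j$ is actually \emph{attained} by some $s\in\mathcal{B}$ (equivalently, that the intervals $[s_{j-1}^{min},-\beta s_{j-1}^{min}]$ used along the way are nonempty); this follows immediately from $s_i^{min}\le 0$ — the first of the derived conditions — together with $s_3^{min}<0$. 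Assembling the necessity direction (evaluate $\overline{b}(s)_j$ at an $s\in\mathcal{B}$ realizing $M_{j-1}$) with the sufficiency direction (the recursive bounds force $\overline{b}(s)_j\ge s_j^{min}$ for every $s\in\mathcal{B}$) then closes the equivalence.
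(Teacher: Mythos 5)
Your proposal is correct and follows essentially the same route as the paper's proof: reduce the vector condition to $s_j^{min}\le\min_{s\in\mathcal{B}}\overline{b}(s)_j$ coordinate by coordinate, compute $\max_{s\in\mathcal{B}}s_{j-1}$ (giving $0$, $-\beta\dot y_{max}$, and then the recursion $s_j^{min}\le\beta^2 s_{j-2}^{min}$), and unroll along the even and odd index chains. Your added remarks on attainment of the maxima and on the sufficiency direction are points the paper leaves implicit, but they do not change the argument.
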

\begin{proof}
    Let $s \in \mathcal{B}$. Then, $s_2 \in \big[ s_2^{min}, \beta (y_{max} - s_1) \big]$. 
    \begin{equation*}
        s_2^{min} \leq \underset{s}{\min}\, \overline{b}_2(s) = \underset{s}{\min}\, \beta (y_{max} - s_1) = 0,
    \end{equation*}
    since $s_1 \leq y_{max}$. Hence, $s_2^{min} \leq 0$. Similarly,
    \begin{equation*}
        s_3^{min} \leq \underset{s}{\min}\, \overline{b}_3(s) = \underset{s}{\min} -\beta s_2 = -\beta \underset{s}{\max}\, s_2 = -\beta \dot y_{max}.
    \end{equation*}
    For $k \in [\![2, r-2]\!]$, we have
    \begin{align*}
        s_{k+2}^{min} &\leq \underset{s}{\min}\, \overline{b}_{k+2}(s) = \underset{s}{\min} -\beta s_{k+1} = -\beta \underset{s}{\max}\, s_{k+1} \\
        &\leq -\beta \underset{s}{\max}\, \overline{b}_{k+1}(s) = -\beta \underset{s}{\max} -\beta s_k = \beta^2 \underset{s}{\min}\, s_k \\
        &\leq \beta^2 s_k^{min}.
    \end{align*}
    Applying this inequality recursively leads to $s_{2k+1}^{min} \leq -\beta^{2k-1} s_3^{min} \leq -\beta^{2k-1} \dot y_{max}$ and $s_{2k+2}^{min} \leq \beta^{2k} s_{2}^{min}$ for all $k \in [\![1, r/2]\!]$.
\end{proof}

Armed with Lemma~\ref{lemma: lower bounds}, we can now calculate the minimal number of vertices of buffer $\mathcal{B}$, which corresponds to the case where $\underline{b} = \underset{s}{\min}\, \overline{b}(s)$.
We first need to introduce the Fibonacci sequence as $F_{n+2} = F_{n+1} + F_n$ for all $n \in \mathbb{N}$ with $F_0 = 0$ and $F_1 = 1$ \cite{Fibonacci}.
We recall that $r$ is the relative degree of output $y$ of \eqref{eq: constraint} with respect to dynamics~\eqref{eq: nonlinear dynamics}. We denote the cardinality of a set $\mathcal{Z}$ as $\big|\mathcal{Z}\big|$.

\begin{lemma}\label{lemma: number vertices}
    If $\underline{b} = \underset{s}{\min}\, \overline{b}(s)$, then the number of vertices of $\mathcal{B}$ is $F_{r+2} \times \big|\mathcal{V}(\mathcal{P})\big|$.
\end{lemma}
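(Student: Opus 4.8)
\emph{Step 1 --- peel off $\mathcal P$.} Since the bounds $s_{1:r}\in[\underline b,\overline b(s)]$ in \eqref{eq: buffer} involve only the coordinates $s_{1:r}$ and $s_{r+1:n}\in\mathcal P$ involves only $s_{r+1:n}$, the buffer is a Cartesian product $\mathcal B = P_r\times\mathcal P$ with $P_r := \{s_{1:r}\in\mathbb R^r : \underline b\le s_{1:r}\le\overline b(s)\}$. A point $(p,q)$ is a vertex of a product polytope iff $p$ and $q$ are vertices of the two factors, so $|\mathcal V(\mathcal B)| = |\mathcal V(P_r)|\,|\mathcal V(\mathcal P)|$, and it remains to prove $|\mathcal V(P_r)| = F_{r+2}$.

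\emph{Step 2 --- recursion on the staircase polytope.} For $k\ge 1$ I would let $P_k\subset\mathbb R^k$ be cut out by the first $k$ pairs of bounds in \eqref{eq: buffer lower bound}--\eqref{eq: buffer upper bound}, set $N_k := |\mathcal V(P_k)|$, and adopt the convention $P_0 := \{0\}$, $N_0 := 1$. For $k\ge 2$ one has $P_k = \{(s',t) : s'\in P_{k-1},\ \underline b_k\le t\le\overline b_k(s')\}$, where $\overline b_k(s')$ is the affine map $\beta(y_{max}-s'_1)$ for $k=2$ and $-\beta s'_{k-1}$ for $k\ge 3$; by Lemma~\ref{lemma: lower bounds} the lemma's hypothesis forces $\underline b_k = \min_{P_{k-1}}\overline b_k$ for $2\le k\le r$ (the choice of $y_{min}<y_{max}$ is irrelevant to the count), so $\underline b_k\le\overline b_k(s')$ throughout $P_{k-1}$. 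Hence $P_k$ is the slab over base $P_{k-1}$ between the constant floor $t=\underline b_k$ and the affine ceiling $t=\overline b_k(s')$: the floor facet is a copy of $P_{k-1}$, the ceiling facet is the graph of an affine map over $P_{k-1}$ and hence an affine image of $P_{k-1}$, and any point with $\underline b_k<t<\overline b_k(s')$ can be perturbed along $\pm t$, so every vertex of $P_k$ lies on one of these two facets. As vertices of a face of a polytope are vertices of the polytope, inclusion--exclusion gives $N_k = 2N_{k-1} - q_{k-1}$, where $q_{k-1}$ counts the vertices of $P_{k-1}$ on the seam $\{s'\in P_{k-1} : \overline b_k(s') = \underline b_k\}$.

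\emph{Step 3 --- the seam and the Fibonacci identity.} Because $\underline b_k=\min_{P_{k-1}}\overline b_k$, the seam is a face of $P_{k-1}$, equal for $k\ge 3$ to $\{s'\in P_{k-1} : s'_{k-1}=\max_{P_{k-1}}s_{k-1}\}$. I would carry along the induction the auxiliary statement that for every $j\ge 2$ the face of $P_j$ where its last coordinate is maximal is affinely isomorphic to $P_{j-2}$, hence has $N_{j-2}$ vertices: indeed $\max_{P_j}s_j = -\beta\min_{P_{j-1}}s_{j-1} = -\beta\,\underline b_{j-1}$ (the minimum being $\underline b_{j-1}$ since the floor facet of $P_{j-1}$ is nonempty), and imposing $s_j=-\beta\underline b_{j-1}$ forces $s_{j-1}=\underline b_{j-1}$, whose floor facet in $P_{j-1}$ is a copy of $P_{j-2}$. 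This yields $q_{k-1}=N_{k-3}$, so $N_k = 2N_{k-1}-N_{k-3}$ for $k\ge 3$, with $N_0=1$, $N_1=2$ (segment), $N_2=3$ (triangle); then by induction $N_0=1=F_2$, $N_1=2=F_3$, $N_2=3=F_4$, and if $N_j=F_{j+2}$ for $j<k$ then $N_k = 2F_{k+1}-F_{k-1} = F_{k+1}+(F_{k+1}-F_{k-1}) = F_{k+1}+F_k = F_{k+2}$. Taking $k=r$ and combining with Step 1 gives $|\mathcal V(\mathcal B)| = F_{r+2}\,|\mathcal V(\mathcal P)|$. The main obstacle is the bookkeeping in Step 3: pinning the seam down as a face of the base polytope and maintaining, through the induction, that this face is again a lower-dimensional copy of the same staircase polytope --- this is exactly where the hypothesis $\underline b=\min_s\overline b(s)$ is genuinely needed (it makes every floor facet a full copy of the one-lower polytope, which fails once some lower bound is slack) and where one must track which of the $2k$ defining inequalities of $P_k$ are active versus redundant. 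Steps 1, 2 and the final induction are routine facts about products, faces, and vertices of polytopes together with the identity $F_{k+1}-F_{k-1}=F_k$.
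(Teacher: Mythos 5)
Your proof is correct, and after the shared opening move --- the Cartesian-product factorization $\mathcal{B}=\mathcal{B}^r\times\mathcal{P}$, which the paper uses too --- it takes a genuinely different route to the count $F_{r+2}$. The paper enumerates the vertices of the staircase factor combinatorially: it builds a tree recording, coordinate by coordinate, which bound is active, sorts nodes into three types ($s_k^{min}$, $s_k^{max}$, and the degenerate $s_k^{min}=s_k^{max}$), and reads off the Fibonacci recursion $n(k+1)=n(k)+n(k-1)$ directly from the transition rules between types. You instead run a face-lattice induction: $P_k$ is a slab over $P_{k-1}$ whose floor and ceiling facets are each affine copies of $P_{k-1}$, every vertex lies on one of them, and inclusion--exclusion over the seam (which you identify with $P_{k-3}$) gives $N_k=2N_{k-1}-N_{k-3}$, converted to Fibonacci via $2F_{k+1}-F_{k-1}=F_{k+2}$. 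The two recursions encode the same structure (your seam count $N_{k-3}$ is exactly the paper's $n^{nx}(k)=n^{max}(k-1)$), but your argument is the more rigorous of the two: it pins down what a vertex is, justifies that the floor, ceiling, and seam are faces so that their vertices are vertices of $P_k$ and conversely, and correctly removes the double-counted seam --- points the paper's tree argument leaves implicit by tacitly identifying each leaf with a distinct genuine vertex. What the paper's enumeration buys in exchange is an explicit list of the vertices, which is what one actually needs in order to check dissipation condition~\eqref{eq: dissipation} at $\mathcal{V}(\mathcal{B})$ in practice. One cosmetic slip on your side: the identity $\max_{P_j} s_j=-\beta\min_{P_{j-1}}s_{j-1}$ holds literally only for $j\ge 3$; for $j=2$ the ceiling is $\beta(y_{max}-s_1)$, which is nevertheless still maximized at $s_1=\underline{b}_1=y_{min}$, so your auxiliary claim, the value $q_2=N_0=1$, and the recursion at $k=3$ are all unaffected.
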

\begin{proof}
    We introduce $\mathcal{B}^r := \big\{ s_{1:r} \in [\underline{b},\ \overline{b}(s)] : s \in \mathcal{B}\big\}$.
    Then, buffer $\mathcal{B}$ of \eqref{eq: buffer} can be written as the Cartesian product $\mathcal{B} = \mathcal{B}^r \times \mathcal{P}$. Since $\big|\mathcal{V}(\mathcal{P})\big|$ denotes the number of vertices of polytope $\mathcal{P}$, we only need to show that the number of vertices of $\mathcal{B}^r$ is equal to $F_{r+2}$.

    Following Lemma~\ref{lemma: lower bounds},  $\underline{b} = \underset{s}{\min}\, \overline{b}(s)$ yields, $s_{2k}^{min} = 0$ and $s_{2k+1}^{min} = -\beta^{2k-1}\dot y_{max}$ for all $k \in [\![1, r/2]\!]$.
    Since upper bound $\overline{b}_k(s)$ depends on the value of $s_{k-1}$, to enumerate the vertices of $\mathcal{B}^r$ we need to first list the possibilities for $s_1$, then for $s_2$, and so on.
    At dimension $k \in [\![2, r]\!]$, we have $s_k \in [s_k^{min}, -\beta s_{k-1}]$. If $s_{k-1} = s_{k-1}^{min}$, then $s_k \in [s_k^{min}, s_k^{max}]$. If $s_{k-1} = s_{k-1}^{max}$, then $s_k^{max} = -\beta s_{k-1} = s_k^{min}$ by assumption $\underline{b} = \min_s\, \overline{b}(s)$.
    
    To enumerate the number of vertices of $\mathcal{B}^r$, we build a tree listing all the possibilities where each level correspond to a dimension as shown in Fig.~\ref{fig: tree}.

    \begin{figure}[htbp!]
        \centering
        \begin{tikzpicture}
            \node at (0, 2.4) {$s_1^{min}$} [grow'= right, level distance=23mm, level/.style={sibling distance=15mm/#1}]
                child {node {$s_2^{min}$}
                    child {node {$s_3^{min}$}
                        child {node {$s_4^{min}$}}
                        child {node {$s_4^{max}$}}
                        }
                    child {node {$s_3^{max}$}
                        child {node {$s_4^{min}\! =\! s_4^{max}$}}
                    }
                }
                child {node {$s_2^{max}$}
                    child {node {$s_3^{min}\! =\! s_3^{max}$}
                        child {node {$s_4^{min}$}}
                        child {node {$s_4^{max}$}}
                  }
                };            
            \node at (0, 0) {$s_1^{max}$} [grow'= right, level distance=23mm, level/.style={sibling distance=15mm/#1}]
                child {node {$s_2^{min}\! =\! s_2^{max}$}
                    child {node {$s_3^{min}$}
                        child {node {$s_4^{min}$}}
                        child {node {$s_4^{max}$}}
                    }
                    child {node {$s_3^{max}$}
                        child {node {$s_4^{min}\! =\! s_4^{max}$}}
                    }
                };
        \end{tikzpicture}%
        \caption{Tree of all the vertex combinations for $\mathcal{B}_r$. A node $s_k^{min}$ allows $s_{k+1}$ to take value in its whole range $[s_{k+1}^{min}, s_{k+1}^{max}]$ and thus yields two nodes $s_{k+1}^{min}$ and $s_{k+1}^{max}$. However, a node $s_k^{max}$ causes $\overline{b}_k = \underline{b}_k$ and thus yields a single node $s_{k+1}^{min} = s_{k+1}^{max}$. Finally, a node $s_{k}^{min} = s_{k}^{max}$ forces $s_k = s_k^{min}$ and thus $s_{k+1}$ can use its whole range, which yields two nodes $s_{k+1}^{min}$ and $s_{k+1}^{max}$.}
        \label{fig: tree}
    \end{figure}

    We count the nodes of the tree. At level $k \in [\![1,r]\!]$ we define the number of nodes where $s_k = s_k^{min}$ as $n^{min}(k)$, the number of nodes where $s_k = s_k^{max}$ as $n^{max}(k)$, the number of nodes where $s_k = s_k^{min} = s_k^{max}$ as $n^{nx}(k)$, and the total number of nodes as $n(k) := n^{min}(k) + n^{max}(k) + n^{nx}(k)$.
    
    A node $s_k^{min}$ allows $s_{k+1}$ to take value in its whole range $[s_{k+1}^{min}, s_{k+1}^{max}]$ and thus yields two nodes $s_{k+1}^{min}$ and $s_{k+1}^{max}$. However, a node $s_k^{max}$ causes $\overline{b}_k = \underline{b}_k$ and thus yields a single node $s_{k+1}^{min} = s_{k+1}^{max}$. Finally, a node $s_{k}^{min} = s_{k}^{max}$ forces $s_k = s_k^{min}$ and thus $s_{k+1}$ can use its whole range, which again yields two nodes $s_{k+1}^{min}$ and $s_{k+1}^{max}$.
    
    Thus, each node $s_{k+1}^{min} = s_{k+1}^{max}$ is created by a node $s_{k}^{max}$, so that $n^{nx}(k+1) = n^{max}(k)$. On the other hand, there is a $s_{k+1}^{min}$ and $s_{k+1}^{max}$ node for each $s_{k}^{min}$ and each $s_{k}^{min} = s_{k}^{max}$ nodes, thus $n^{min}(k+1) = n^{max}(k+1) = n^{min}(k) + n^{nx}(k)$.
    
    Then, for $k \in [\![3, r]\!]$, we have
    \begin{align*}
        n(k+1) &= n^{min}(k+1) + n^{max}(k+1) + n^{nx}(k+1) \\
        &\hspace{-1.3mm}= 2n^{min}(k) + n^{max}(k) + 2n^{nx}(k) \\
        &\hspace{-2.6mm}= n(k) + n^{min}(k) + n^{nx}(k) \\
        &\hspace{-4mm} = n(k) + n^{min}(k-1) + n^{nx}(k-1) + n^{max}(k-1) \\
        &\hspace{-5mm}= n(k) + n(k-1).
    \end{align*}
    Therefore, $n$ verifies the recursion relation of the Fibonacci sequence \cite{Fibonacci}. Additionally, Fig.~\ref{fig: tree} shows that $n(1) = 2 = F_3$ and $n(2) = 3 = F_4$. Thus, $n(k) = F_{k+2}$.
\end{proof}

\bibliographystyle{IEEEtran}
\bibliography{references}

\end{document}